\documentclass[a4, 12pt,onecolumn]{IEEEtran}

\usepackage{tabularx}
\usepackage{multicol}
\usepackage{tabularx}
\usepackage{graphicx}
\usepackage{algorithmic}
\usepackage{algorithm}
\usepackage{array}

\usepackage{textcomp}
\usepackage{stfloats}
\usepackage{url}
\usepackage{verbatim}
\usepackage{graphicx}

\usepackage[square,numbers,sort&compress]{natbib}
\usepackage{graphicx}

\usepackage{float}
\usepackage{amsmath}
\usepackage{hyperref} 
\usepackage{amsthm}
\usepackage{amsmath}
\usepackage{mathrsfs}
\usepackage{amssymb} 
\usepackage{bbm,dsfont}   
\usepackage{enumerate} 

\usepackage[table]{xcolor}
\usepackage[font=small,labelfont=bf]{caption}
\usepackage{multirow}
\usepackage{tikz}
\usetikzlibrary{calc}
\usepackage{pifont}%

\usepackage{physics}

\usepackage{xcolor}

\usepackage{tabularx}

\newcounter{protocol}
\renewcommand{\theprotocol}{\arabic{protocol}}

\usepackage{subfig}

\hypersetup{
    colorlinks,
    linkcolor={blue!80!black},%
    citecolor={green!30!black},
    urlcolor={blue!80!black}
}

\newtheorem{theorem}{Theorem}
\theoremstyle{remark}	
\theoremstyle{remark}	
\theoremstyle{remark}	\newtheorem{proposition}[theorem]{Proposition}
\theoremstyle{remark} \newtheorem{definition}{Definition}

\theoremstyle{remark} \newtheorem{remark}{Remark}
\theoremstyle{remark} 

\theoremstyle{remark}	\newtheorem*{discussion*}{Discussion}
\title{On (Im)possibility of
Network Oblivious Transfer via Noisy Channels and Non-Signaling Correlations}
\author{Hadi Aghaee$^\dagger$, Christian Deppe$^\dagger$, Holger Boche$^*$\\
\textit{$^\dagger$Institute for Communications Technology, Technical University of Braunschweig,}\\
\textit{Braunschweig, Germany}\\
\textit{$^*$Chair of Theoretical Information Technology, Technical University of 
Munich,}\\ \textit{
Munich, Germany}\\
Email: (hadi.aghaee, christian.deppe)@tu-bs.de, boche@tum.de
}
\begin{document}
\date{}
\maketitle
\begin{abstract}
    This work investigates the fundamental limits of implementing network oblivious transfer via noisy multiple access channels and broadcast channels between honest-but-curious parties when the parties have access to general tripartite non-signaling correlations. By modeling the shared resource as an arbitrary tripartite non-signaling box, we obtain a unified perspective on both the channel behavior and the resulting correlations. Our main result demonstrates that perfect oblivious transfer is impossible. In the asymptotic regime, we further show that even negligible leakage cannot be achieved, as repeated use of the resource amplifies the receiver(s)'s ability to distinguish messages that were not intended for him/them. In contrast, the receiver(s)'s own privacy is not subject to a universal impossibility limitation.
\end{abstract}
\section{Introduction}
Oblivious transfer (OT) was first introduced by Rabin \cite{Rabin1}, who considered a communication scenario in which Alice sends a message to Bob, Bob successfully receives the message with probability one half, and Alice has no information about whether the transmission was successful. From an information-theoretic perspective in the sense of Shannon, this mechanism is equivalent to an erasure channel with erasure probability one half. This conceptualization later inspired the seminal work of Even, Goldreich, and Lempel \cite{EGL}, who formalized OT and established it as a fundamental cryptographic primitive.

It is known that OT cannot be constructed from scratch using cryptographic techniques based solely on indistinguishability assumptions. This impossibility result extends to a wide range of interactive primitives, including bit commitment, zero knowledge, secret sharing, and secure two-party computation, even when adversaries are computationally bounded. Dodis et al. \cite{Dodis} further showed that OT cannot be implemented from any source of imperfect randomness, including Santha–Vazirani (SV) sources \cite{SV} with arbitrarily small bias. Crucially, this impossibility is unconditional and does not rely on computational assumptions or restrictions on adversarial behavior. More generally, any cryptographic functionality that requires privacy is unattainable when only SV-type entropy sources are available. Information theory, therefore, points to noisy communication channels as a principled and viable source of entropy.

Access to noisy channels enables a rich class of information-theoretic cryptographic protocols, a research direction initiated in early foundational works and substantially developed over time \cite{Rabin1, Joe, Winter1, Winter3, Rudolf1, Crepeau3, Imai, Watanabe2, Watanabe3}, and in the context of symmetric private information retrieval (SPIR) has been studied under the terminology OT,
with a noisy channel between parties to achieve information-theoretic security, e.g., \cite{amir1,amir2, amir3}. In contrast, shared noisy channels—where multiple users communicate over a common medium and observe correlated noise realizations—have received comparatively limited attention. Initial investigations appeared only recently \cite{Hadi2, Hadi1, Hadi3, Hadi6, Hadi4}. Unlike independent point-to-point noisy links, shared channels allow participants to leverage noise correlations for coordinated encoding and decoding, reduced communication overhead, and enhanced privacy guarantees. These features are particularly relevant in settings where multiple receivers obtain information from a single source without revealing their individual choices. Consequently, a systematic understanding of shared noisy channels is a key step toward extending OT beyond pairwise communication to more general network models, including broadcast channels (BCs) and multiple-access channels (MACs).

In parallel, an emerging line of research in the context of 6G systems aims to augment classical communication networks with quantum resources, promising improvements in throughput, security, and latency \cite{Granelli}. While substantial progress has been achieved in point-to-point quantum communication \cite{Ekert,Bennett} and in relatively simple multiuser scenarios such as classical MACs \cite{Hsieh, Janis, Leditzky, Seshadri, Pereg1, Fawzi} and BCs \cite{Yard, Pereg2}, more intricate network configurations remain largely unexplored. The work in \cite{Quek} studied rate-splitting advantages enabled by entangled transmitters, introducing an interference channel (IC) motivated by the CHSH game \cite{Clauser} and employing Popescu–Rohrlich box (PR-box)–type superquantum correlations \cite{Popescu}. Subsequent studies \cite{Hawellek1, Hawellek2, Hadi5} analyzed ICs with two senders and two receivers sharing entanglement, demonstrating that quantum advantages induced by non-classical correlations are not confined to MACs but also extend to IC models.

Motivated by these developments, we investigate OT over a discrete memoryless multiple-access channel (DM-MAC) with two senders assisted by a tripartite non-signaling correlation (NS-box). Fawzi et al. \cite{Fawzi} showed that such an NS-box can strictly enlarge the capacity region of a classical DM-MAC. The central question addressed here is whether analogous gains can be achieved for the OT capacity of a DM-MAC \cite{Hadi1}, and, in particular, for the special case of the binary adder channel \cite{Chou}.

This paper is organized as follows: In Section \ref{Sys}, we present the definition of a tripartite NS box in terms of probability distributions and mutual information quantities. We also present OT definitions over both DM-MAC and DM-BCs.
In Section \ref{Results}, we present our main impossibility results of network OT via noisy channels. We also discuss the feasibility of OT via bipartite NS-assisted discrete memoryless point-to-point noisy channels (DMC), showing that OT between two parties via any NS-assisted noisy DMC is not possible if both parties are included in the NS correlation. In Section \ref{Disc}, we have a detailed discussion on the feasibility of network OT over the DM-MAC and DM-BC, where a bipartite NS correlation is shared only between the transmitters or the receivers.

\emph{Notation:} We adopt standard notation from information theory, with the following conventions. Let $\mathbb{N}$ be the set of natural numbers and $\mathbb{R}$ be the set of real numbers.
For any $x, y \in \mathbb{R}$, define $[x:y] \triangleq [\lfloor x \rfloor, \lceil y \rceil] \cap \mathbb{N}
\quad \text{and} \quad
[x] \triangleq [1:x]$.
Script letters such as $\mathcal{X}$ denote finite sets.
Uppercase letters, e.g., $X$, represent classical Random Variables (RVs),
while lowercase letters, e.g., $x$, denote their realizations.
For $n \in \mathbb{N}$, we write $x^n = (x_i)_{i \in [n]}$
for a length-$n$ sequence over $\mathcal{X}$.
The distribution of a discrete RV $X$ is denoted by the Probability Mass Function (PMF)
$p_X(x)$ on the finite alphabet $\mathcal{X}$.
Let $p_X$ and $q_X$ be PMFs on a finite set $\mathcal{X}$.
The total variation distance between $p_X$ and $q_X$ is defined as $\mathbb{V}(p_X, q_X)
\triangleq \frac{1}{2} \sum_{x \in \mathcal{X}} |p_X(x) - q_X(x)|$. We use the notation $\overline{A} \triangleq A \oplus 1$ to denote the binary complement of a bit $A \in \{0,1\}$, where $\oplus$ denotes addition modulo two. To extend this notion to non-binary alphabets, let $\mathcal{I}$ be a finite index set with $|\mathcal{I}| = 2$. Define a bijection $f : \mathcal{I} \to \mathcal{I}$
such that $f(f(i)) = i$ for all $i \in \mathcal{I}$.
We denote $\overline{i} \triangleq f(i)$ and refer to $\overline{i}$ as the complement of $i$.
Hence, $\overline{\overline{i}} = i$, and the mapping $f$ simply swaps the two elements of $\mathcal{I}$. For example, if $i \in \{1, 2\}$, then $\overline{i} = 1$ when $i = 2$.
\section{System Model}\label{Sys}
This section introduces the system models and formal definitions used throughout the paper. We describe network OT over both multiple-access and broadcast channels, with and without non-signaling assistance, and specify the associated correctness and security criteria.
\subsection{Network oblivious transfer over DM-MAC}
We introduce the system model for network OT over a DM-MAC with two senders and one receiver. Our objective is to formalize how OT can be embedded into a shared-channel setting, where multiple transmitters communicate with a common receiver in the presence of noise and public interaction.

We begin by describing the notion of a tripartite NS correlation that may be shared among the two senders and the receiver. This abstraction allows us to model, in a unified manner, both classical channel behavior and additional correlation resources that may arise from quantum or super-quantum mechanisms. We then define the OT protocol, security requirements, and achievable rate notions for the DM-MAC, both with and without NS assistance.

These definitions establish the baseline framework used throughout the paper to analyze the (im)possibility of network OT in MAC settings.
\begin{definition}[Bipartite NS correlation]\label{def: bipartite NS}
Let $\mathcal{I}_1,\mathcal{I}_2,\mathcal{X}_1,\mathcal{X}_2$ be finite sets.
A bipartite NS-box shared between Alice-1 and Alice-2 is specified by a
conditional probability distribution
\begin{align}\label{bipartite-NS}
    P(x_1,x_2 \mid \imath_1,\imath_2),
\end{align}
where $I_1 \in \mathcal{I}_1$ and $I_2 \in \mathcal{I}_2$ are the respective inputs,
and $X_1 \in \mathcal{X}_1$ and $X_2 \in \mathcal{X}_2$ are the respective outputs.
The box is said to be non-signaling if, for every joint input distribution
$P_{I_1I_2}$, the induced random variables $(I_1,I_2,X_1,X_2)$ satisfy
\begin{align}
I(I_1; X_2 \mid I_2) &= 0, \label{bipartite-NS1}\\
I(I_2; X_1 \mid I_1) &= 0. \label{bipartite-NS2}
\end{align}
Equivalently, the conditional distribution satisfies the marginal constraints
\begin{align}
\sum_{x_1} P(x_1,x_2 \mid \imath_1,\imath_2)
&=
\sum_{x_1} P(x_1,x_2 \mid \imath_1',\imath_2),
\quad \forall\, \imath_1,\imath_1',\imath_2,x_2, \label{eq:NS-prob1}\\
\sum_{x_2} P(x_1,x_2 \mid \imath_1,\imath_2)
&=
\sum_{x_2} P(x_1,x_2 \mid \imath_1,\imath_2'),
\quad \forall\, \imath_1,\imath_2,\imath_2',x_1. \label{eq:NS-prob2}
\end{align}
\end{definition}
\begin{definition}[Tripartite NS correlation-assisted DM-MAC \cite{Fawzi}]
A tripartite NS-box over DM-MAC with two senders is described by a joint conditional probability distribution
\begin{equation}\label{NS}
    P(x_1 x_2 ( \jmath_1  \jmath_2) \mid  \imath_1  \imath_2 y),
\end{equation}
where $x_1, x_2$ denote the outputs of the senders, $( \jmath_1,  \jmath_2)$ are the receiver’s outputs,
and $( \imath_1,  \imath_2, y)$ are the respective inputs.  
The NS property requires that the marginal distribution corresponding to any two parties
is independent of the input of the third party. 
That is, for all $x_1, x_2,  \jmath_1,  \jmath_2,  \imath_1,  \imath_2, y,  \imath_1',  \imath_2', y'$, the following conditions hold:
\begin{align}
\sum_{x_1} P(x_1 x_2 ( \jmath_1  \jmath_2) \mid  \imath_1  \imath_2 y) 
&= \sum_{x_1} P(x_1 x_2 ( \jmath_1  \jmath_2) \mid  \imath_1'  \imath_2 y), \label{NS1}\\[6pt]
\sum_{x_2} P(x_1 x_2 ( \jmath_1  \jmath_2) \mid  \imath_1  \imath_2 y) 
&= \sum_{x_2} P(x_1 x_2 ( \jmath_1  \jmath_2) \mid  \imath_1  \imath_2' y), \label{NS2}\\[6pt]
\sum_{ \jmath_1,  \jmath_2} P(x_1 x_2 ( \jmath_1  \jmath_2) \mid  \imath_1  \imath_2 y) 
&= \sum_{ \jmath_1,  \jmath_2} P(x_1 x_2 ( \jmath_1  \jmath_2) \mid  \imath_1  \imath_2 y'). \label{NS3}
\end{align}

Equations~\eqref{NS1}–\eqref{NS3} express that no subset of parties 
can signal information about their inputs to the others through the shared NS resource. Equations \eqref{NS1}-\eqref{NS3} can be written in terms of mutual information quantities as follows: \begin{align}\label{eq: NSM1}
    I(I_1 ; X_2, J \mid I_2, Y) &= 0,\\
    \label{eq: NSM2}
    I(I_2 ; X_1, J \mid I_1, Y) &= 0,\\
    \label{eq: NSM3}
    I(Y ; X_1, X_2 \mid I_1, I_2) &= 0.
\end{align}
\end{definition}
The tripartite NS condition ensures, for example, that the marginal
distribution $ P(x_1 x_2 \mid \imath_1 \imath_2) $ is independent of the 
receiver's input $ y $,
and that $ P(x_1 \mid \imath_1) $ is independent of both $ \imath_2 $ and $ y $.
Hence, in our coding framework, providing NS assistance to the senders and the receiver
amounts to granting them access to a shared tripartite NS-box described by the joint conditional distribution $ P $.

Note first that the property of $P$ being an NS-box is entirely independent of $W$; in particular, the variable $y$ does not need to satisfy any specific distribution in the definition of $P$ as an NS-box. The only remaining subtlety lies in the apparent requirement to perform encoding and decoding simultaneously. However, since $P$ is non-signaling, there is no need to do both at once. Indeed, by the NS property, we have
$$
\forall y, \quad P(x_1 x_2 \mid \imath_1 \imath_2) = P(x_1 x_2 \mid \imath_1 \imath_2 y).
$$
Therefore, the outputs $x_1, x_2$ can be obtained from the inputs $\imath_1, \imath_2$ without requiring knowledge of $y$, as the distribution of $x_1, x_2$ is unaffected by it. Subsequently, the variable $y$ follows the distribution determined by $W$, conditioned on $x_1, x_2$. Finally, given access to $y$, the decoding process is described by:
\begin{equation}\label{eq: causal}
    P((\jmath_1 \jmath_2) \mid \imath_1 \imath_2 y x_1 x_2)
  = \frac{P(x_1 x_2 (\jmath_1 \jmath_2) \mid \imath_1 \imath_2 y)}
         {P(x_1 x_2 \mid \imath_1 \imath_2 y)}
  = \frac{P(x_1 x_2 (\jmath_1 \jmath_2) \mid \imath_1 \imath_2 y)}
         {P(x_1 x_2 \mid \imath_1 \imath_2)}.
\end{equation}
Hence, we recover globally $P((\jmath_1 \jmath_2) \mid \imath_1 \imath_2 y x_1 x_2) \times P(x_1 x_2 \mid \imath_1 \imath_2)
= P(x_1 x_2 (\jmath_1 \jmath_2) \mid \imath_1 \imath_2 y),
$ which corresponds to the prescribed conditional probability. The NS condition guarantees that it is possible to treat the conditional probabilities associated with each party independently. This observation clarifies how one can consistently perform encoding and decoding of messages using an NS-box shared among the three parties. In summary, there is the following causal one-way relation:
\begin{equation*}
(\imath_1,\imath_2)
\xrightarrow[]{\text{NS-box}}
(x_1,x_2)
\xrightarrow[\text{DM-MAC}]{}
y
\xrightarrow[\text{with $(\imath_1,\imath_2)$}]{\text{NS-box}}
(\jmath_1,\jmath_2).
\end{equation*}
\begin{definition}[Nontrivial NS-box-MAC form]\label{def: non-trivial NS}
A tripartite NS-box $P(x_1 x_2 (\jmath_1 \jmath_2) \mid \imath_1 \imath_2 y)$, is said to be \emph{nontrivial} if there exist at least two distinct input pairs
$(\imath_1\imath_2)$ and $(\imath_1',\imath_2')$ and some $y$ such that
\begin{align}\label{non-trivial NS}
    P((\jmath_1 \jmath_2) \mid \imath_1\imath_2y) \neq P((\jmath_1 \jmath_2) \mid \imath_1',\imath_2',y).
\end{align}
Equivalently, the marginal distribution of Bob's outputs depends on the senders' joint inputs.
If this condition does not hold, i.e.,
\[
P((\jmath_1 \jmath_2) \mid \imath_1\imath_2y) = P((\jmath_1 \jmath_2) \mid \imath_1',\imath_2',y)
\quad\text{for all } (\imath_1\imath_2),(\imath_1',\imath_2'),y,
\]
then the box is called \emph{trivial}. A trivial NS-box is one in which Bob's marginal output distribution is independent of the senders' inputs.
Such a box, while satisfying the NS constraints, provides no correlation
beyond independent local randomness.
In particular, for a trivial box, the global distribution factorizes as
\[
P(x_1 x_2 (\jmath_1 \jmath_2) \mid \imath_1 \imath_2 y)
= P(x_1 \mid \imath_1)\, P(x_2 \mid \imath_2)\, P((\jmath_1 \jmath_2) \mid y),
\]
so the outputs of the three parties are statistically independent across locations. Therefore, a trivial NS-box cannot increase the achievable rate region of a multiple-access channel,
since the receiver's outputs are independent of the senders' channel inputs. In other words, the box is said to be \emph{nontrivial} if and only if $I\bigl(I_1,I_2; J \mid Y\bigr) > 0$. Equivalently, the box is \emph{trivial} if and only if $I\bigl(I_1,I_2; J \mid Y\bigr) = 0$,
in which case Bob’s marginal output distribution is independent of the senders’ joint inputs.
\end{definition}

\begin{definition}[OT setup over a DM-MAC, Figure \ref{NS-OT}-$(a)$]\label{def: OT-MAC}
    Let $n, k_1, k_2 \in \mathbb{N} $. An $(n, k_1, k_2)$ protocol involves interaction among Alice-1, Alice-2, and Bob. At each time step $l = 1, 2, \ldots, n $, Alice-$i$ transmits a bit $X_{i,l} $ through the MAC. Users alternately exchange messages over a noiseless public channel in multiple rounds, both prior to each transmission and after the final transmission at $l = n $. While the number of rounds may vary, it is finite. Each user's transmission is determined by a function of their input, private randomness, and all prior messages, channel inputs, or channel outputs observed. A positive rate pair $(R_1, R_2)$ is said to be an achievable OT rate pair for the DM-MAC if for $n\to\infty$ there exist $(n, k_1, k_2)$ protocols satisfying $\frac{k_i}{n} \to R_i$ such that for \emph{non-colluding} parties, the asymptotic conditions \eqref{goals: MAC-nColl-1}--\eqref{goals: MAC-nColl-3} hold:
\begin{align}\label{goals: MAC-nColl-1}
    \lim_{n\rightarrow \infty}\text{Pr}\,\left[(\hat{M}_{1Z_1}, \hat{M}_{2Z_2})\neq (M_{1Z_1}, M_{2Z_2})\right] = 0,\\
    \label{goals: MAC-nColl-2}
    \lim_{n\rightarrow \infty}I (M_{1\overline{Z}_1},M_{2\overline{Z}_2} ; V) = 0,\\
    \label{goals: MAC-nColl-3}
    \lim_{n\rightarrow \infty}I (Z_i ; U_i)_{i \in \{1,2\}} = 0,
\end{align}
where the final view of Alice-$i$ and Bob are $U_i = (M_{i0},M_{i1},R_{A_i},X_i^n, \mathbf{C})$ and $V = (Z_1, Z_2, R_B, Y^n, \mathbf{C})$, respectively, and $\mathbf{C}\triangleq (\mathbf{C}_1, \mathbf{C}_2)$. Conditions \eqref{goals: MAC-nColl-1}-\eqref{goals: MAC-nColl-3} are correctness, security for Alice-$i$ (SfA-$i$), and security for Bob (SfB), respectively. 
\end{definition}

\begin{definition}[Network OT setup via tripartite NS correlation-assisted DM-MAC, Figure \ref{NS-OT}-$(c)$]
    Consider a DM-MAC with two senders, Alice-1 and Alice-2, and one receiver, Bob. The parties are additionally provided with a shared tripartite NS resource characterized by the conditional distribution defined in \eqref{NS} with the properties as stated in \eqref{NS1}-\eqref{NS3}. An $(n,k_1,k_2)$ \emph{NS assisted OT protocol} over the DM-MAC is defined as an $(n,k_1,k_2)$ OT protocol in which the parties are additionally allowed to make use of the shared NS resource described above. At each time step $l=1,\dots,n$, Alice-$i$ produces a channel input $X_{i,l}$ and possibly an NS-box input $\imath_{i,l}$,  according to her local view of the protocol and previous outputs of the NS-box. Bob may also input a variable $y_l$ to the NS-box and obtain corresponding outputs $(\jmath_{1,l}, \jmath_{2,l})$, which may be used in subsequent channel or local computations. A rate pair $(R_1,R_2)$ is said to be \emph{achievable under NS assistance} if, for $n\to\infty$, 
there exist $(n,k_1,k_2)$ NS-assisted protocols satisfying conditions \eqref{goals: MAC-nColl-1}-\eqref{goals: MAC-nColl-3},
where $U_i = (M_{i0}, M_{i1}, R_{A_i}, X_i^n, \mathbf{C})$ denotes the final view of Alice-$i$, 
and $V = (Z_1, Z_2, R_B, Y^n, \mathbf{C})$ the final view of Bob.
Here $R_{A_i}$ and $R_B$ represent local randomness, $\mathbf{C}$ denotes public communication, 
and the random variables may depend on both the MAC and the NS-box.

\medskip
The closure of all achievable rate pairs $(R_1, R_2)$ satisfying 
\eqref{goals: MAC-nColl-1}--\eqref{goals: MAC-nColl-3} is called the \emph{NS assisted OT capacity region} of the MAC with non-colluding parties.

\end{definition}
\begin{figure*}[tb]

\includegraphics[scale=0.859,trim={2.47cm 32cm 3cm 3.5cm},clip]{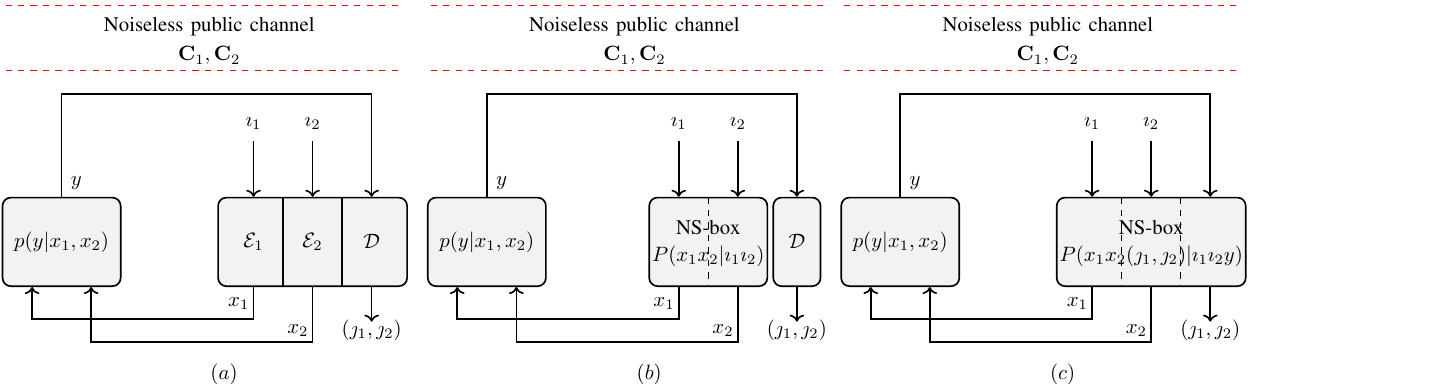} %

\caption{$(a)$ The OT system model over a two-user noisy DM-MAC with two encoders and a decoder. $(b)$ The OT system model over a bipartite NS correlation-assisted noisy DM-MAC. $(c)$ The OT system model over a tripartite NS correlation-assisted noisy DM-MAC.}
\label{NS-OT}
\end{figure*}
\begin{definition}[Perfect security criteria in terms of NS-box]
    Fix a blocklength $n$. For $i\in\{1,2\}$ denote by\linebreak $U_i = \bigl(M_{i0},M_{i1},R_{A_i},X_i^n,A_i^n,\mathbf{C}\bigr)$ the entire local view of Alice-$i$ at the end of the protocol: her two files, local randomness $R_{A_i}$, the channel inputs $X_i^n$ (or the NS-inputs $\imath_i^n$), her local NS-box outputs $A_i^n$ (the sequence of $a_{i,t}$), and the public transcript $\mathbf{C}$. Bob's selection bits are $Z=(Z_1,Z_2)$ and his full view is $V=(Z,Y^n,J,R_B,\mathbf{C})$ as before.
    \begin{align}\label{eq: Bob sec-NS}
    P(U_i=u_i\mid M=m,Z=z) &= P(U_i=u_i\mid M=m,Z=z')\quad\text{for all }u_i\\
    \label{eq: Alice sec-NS}
    P\bigl(V=v \mid \imath_1\imath_2,Z=z\bigr)
    &=
    P\bigl(V=v \mid \imath'_1\imath'_2,Z=z\bigr) \quad\text{for all } v. 
    \end{align}
\end{definition}
\subsection{Network OT over DM-BC}
We now consider the dual network setting of a DM-BC, in which a single sender communicates with multiple receivers. In this model, the sender holds multiple message pairs, while each receiver privately selects one message to be delivered without revealing its choice to the sender or to the other receiver.

As in the MAC case, we allow the parties to share a tripartite NS correlation, which may influence the encoding, decoding, and channel interaction while respecting the NS constraints. We first define the structure of such NS correlations for the broadcast scenario and then formalize the corresponding notions of OT, correctness, and security.

The resulting framework enables a direct comparison between MAC and BC settings and serves as the foundation for the impossibility results established in Section \ref{Results}.
\begin{definition}[Tripartite NS-box over DM-BC]
A tripartite NS-box over DM-BC with two receivers is described by a joint conditional probability distribution
\begin{equation}\label{NS-BC}
    P(x \jmath_1  \jmath_2 \mid  \imath y_1 y_2),
\end{equation}
where $x$ denotes the output of the sender, $ \jmath_i$ is the receiver-$i$’s output,
and $(\imath, y_1, y_2)$ are the respective inputs.  
The NS property requires that the marginal distribution corresponding to any two parties
is independent of the input of the third party. 
That is, for all $x,  \jmath_1,  \jmath_2,  \imath, y_1, y_2,  \imath', y'_1,y'_2$, the following conditions hold:
\begin{align}
\sum_{x} P(x \jmath_1  \jmath_2 \mid  \imath y_1 y_2) 
&= \sum_{x} P(x \jmath_1  \jmath_2 \mid  \imath' y_1 y_2), \label{NS1-BC}\\
\sum_{\jmath_1} P(x \jmath_1  \jmath_2 \mid  \imath y_1 y_2) 
& = \sum_{\jmath_1} P(x \jmath_1  \jmath_2 \mid  \imath y'_1 y_2), \label{NS2_BC}\\
\sum_{\jmath_2} P(x \jmath_1  \jmath_2 \mid  \imath y_1 y_2) 
& = \sum_{ \jmath_2} P(x \jmath_1  \jmath_2 \mid  \imath y_1 y'_2). \label{NS3-BC}
\end{align}

The above NS conditions imply that the marginal distribution of any two parties is independent of the input of the third. In particular, the sender-side output can be generated without knowledge of the receivers’ inputs, since for all $(y_1,y_2)$,
\[
P(x \mid \imath)=P(x \mid \imath y_1 y_2).
\]
As a consequence, encoding and decoding need not be performed simultaneously. Operationally, this induces the following one-way causal structure:
\[
\imath
\xrightarrow[]{\text{NS-box}}
x
\xrightarrow[\text{DM-BC}]{}
(y_1,y_2)
\xrightarrow[\text{with $\imath$}]{\text{NS-box}}
(j_1,j_2).
\]
That is, the sender-side output $x$ is produced solely from the sender’s input $\imath$, the broadcast channel generates the receiver-side observations $(y_1,y_2)$, and finally the receiver-side NS-box outputs $(j_1,j_2)$ are generated conditioned on $(\imath,y_1,y_2)$ in a manner consistent with the joint conditional distribution. This causal separation guarantees that the prescribed NS-box distribution can be realized while respecting the NS constraints.

Equations~\eqref{NS1-BC}–\eqref{NS3-BC} express that no subset of parties 
can signal information about their inputs to the others through the shared NS resource. Equations \eqref{NS1-BC}-\eqref{NS3-BC} can be written in terms of mutual information quantities as follows: \begin{align}\label{eq: NSM1-BC}
    I(I ; J_1,J_2 \mid  Y_1,Y_2) &= 0,\\
    \label{eq: NSM2-BC}
    I(Y_1 ; X, J_2 \mid I, Y_2) &= 0,\\
    \label{eq: NSM3-BC}
    I(Y_2 ; X, J_1 \mid I, Y_1) &= 0.
\end{align}
\end{definition}
\begin{definition}[Non-trivial NS-Box-BC form]\label{def: NS-BC-triv}

We consider a broadcast channel with conditional distribution $p(y_1,y_2 \mid x)$, and a tripartite NS-box specified by the conditional distribution $P(xj_1j_2 \mid \imath y_1 y_2)$,
where $I$ denotes the sender input to the NS-box, $(Y_1,Y_2)$ are side-information inputs, $X$ is the output fed into the broadcast channel, and $(J_1,J_2)$ are the receivers' outputs. The induced joint distribution is
\[
P(\imath,x,y_1,y_2,j_1,j_2)
=
P(\imath)\,P(x,j_1,j_2 \mid \imath,y_1,y_2)\,p(y_1,y_2\mid x).
\]
The tripartite NS-box is said to be \emph{trivial} if
\begin{equation}
P(x,j_1,j_2 \mid \imath,y_1,y_2)
=
P(x,j_1,j_2 \mid y_1,y_2),
\quad \forall\, \imath,y_1,y_2.
\label{eq:prob_trivial}
\end{equation}
The tripartite NS-box is said to be \emph{non-trivial} if there exist
$\imath\neq \imath'$ and $y_1,y_2,x,j_1,j_2$ such that
\begin{equation}
P(x,j_1,j_2 \mid \imath,y_1,y_2)
\neq
P(x,j_1,j_2 \mid \imath',y_1,y_2).
\label{eq:prob_nontrivial}
\end{equation}
In other words, the tripartite NS-box is \emph{trivial} if and only if
\begin{equation}
I\!\left(I \,;\, X,J_1,J_2 \mid Y_1,Y_2\right) = 0.
\label{eq:mi_trivial}
\end{equation}
Equivalently,
\[
H(X,J_1,J_2 \mid Y_1,Y_2)
=
H(X,J_1,J_2 \mid I,Y_1,Y_2).
\]
\end{definition}
\begin{definition}\label{def: main-BC}
   Let $n, k_1, k_2 \in \mathbb{N}$. An $(n, k_1, k_2)$ protocol is an interactive communication among Alice, Bob-1, and Bob-2. Over $n$-time steps, Alice transmits a single bit $X_l$ over the broadcast channel at each time $l = 1, \ldots, n$. In addition, the parties may exchange messages over a noiseless public channel in a finite number of rounds both before each transmission and after the final transmission. Each message sent by a party may depend on its input, private randomness, and all previously observed messages, channel inputs, and channel outputs. A rate pair $(R_1, R_2)$ is said to be achievable for oblivious transfer over the DM-BC if, as $n \to \infty$, there exist $(n, k_1, k_2)$ protocols with $\frac{k_i}{n} \to R_i$ such that the asymptotic non-collusion conditions \eqref{goals: BC-nColl-1}--\eqref{goals: BC-nColl-3} are satisfied.
   \begin{align}\label{goals: BC-nColl-1}
    \lim_{n\rightarrow\infty}\text{Pr}\,\left[(\hat{M}_{iZ_i})\neq (M_{iZ_i})\right]_{i\in\{1,2\}}= 0,\\
    \label{goals: BC-nColl-2}
    \lim_{n\rightarrow\infty}I (  M_{i\overline{Z}_i};V_i)_{i\in\{1,2\}} = 0,\\
    \label{goals: BC-nColl-3}
    \lim_{n\rightarrow\infty}I (Z_1, Z_2 ; U) =  0,
\end{align}
where $V_i = (Z_i, R_{B_i}, Y_i^n, \mathbf{C}_i), U = (M_{i0}, M_{i1}, R_{A}, X^n , \mathbf{C}), i\in \{1,2\}$
and $\mathbf{C}\triangleq (\mathbf{C}_1, \mathbf{C}_2)$. Conditions \eqref{goals: BC-nColl-1}-\eqref{goals: BC-nColl-3} are correctness, security for Alice (SfA), and security for Bob-$i$ (SfB-$i$), respectively.
\end{definition}
\begin{definition}[Network OT setup via tripartite NS assisted DM-BC-Figure \ref{fig: NS-OT-BC}-$(b)$]\label{def: OT_NS_BC}
    Consider a DM-BC with two receivers. The parties are additionally provided with a shared tripartite NS resource characterized by the conditional distribution defined in \eqref{NS-BC} with the properties as stated in \eqref{NS1-BC}-\eqref{NS3-BC}. An $(n,k_1,k_2)$ \emph{NS assisted OT protocol} over the DM-BC is defined as an $(n,k_1,k_2)$ OT protocol in which the parties are additionally allowed to make use of the shared NS resource described above. At each time step $l=1,\dots,n$, Alice produces a channel input $X_{l}$ and possibly an NS-box input $\imath_{l}$,  according to her local view of the protocol and previous outputs of the NS-box. Bob-$i$ may also input a variable $y_{i,l}$ to the NS-box and obtain corresponding outputs $\jmath_{i,l}$, which may be used in subsequent channel or local computations. A rate pair $(R_1,R_2)$ is said to be \emph{achievable under NS assistance} if, for $n\to\infty$, there exist $(n,k_1,k_2)$ NS-assisted protocols satisfying conditions \eqref{goals: BC-nColl-1}-\eqref{goals: BC-nColl-3}, where $U = (M_{10}, M_{11}, M_{20}, M_{21}, R_{A}, X^n, \mathbf{C})$ denotes the final view of Alice, and $V_i = (Z_i, R_{B_i}, Y_i^n, \mathbf{C})$ the final view of Bob-$i$. Here $R_{A}$ and $R_{B_i}$ represent local randomness, $\mathbf{C}$ denotes public communication, and the random variables may depend on both the DM-BC and the NS-box.
    The closure of all achievable rate pairs $(R_1, R_2)$ satisfying 
\eqref{goals: BC-nColl-1}--\eqref{goals: BC-nColl-3} is called the \emph{NS assisted OT capacity region} of the MAC with non-colluding parties.

\end{definition}
\begin{figure*}[tb]

\includegraphics[scale=0.9,trim={0.2cm 19.75cm 3cm 1.8cm},clip]{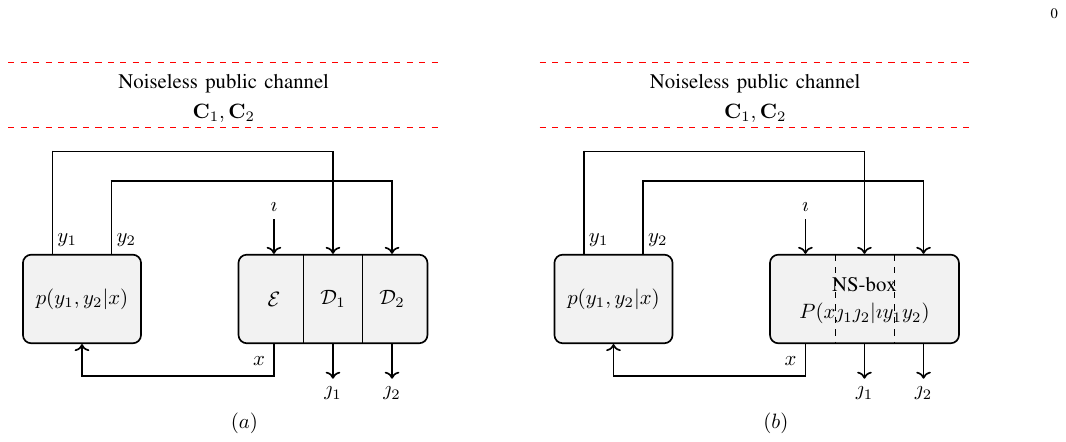}

\caption{(a) The OT system model over a noisy DM-BC with two decoders. (b) The OT system model over a noisy DM-BC with an NS-box.}
\label{fig: NS-OT-BC}
\end{figure*}

\section{Main Results and Proofs}\label{Results}
\subsection{Perfect OT over DM-MAC under a nontrivial tripartite NS-box}
We begin by studying the feasibility of perfect oblivious transfer over a two-sender DM-MAC when all parties are assisted by a shared tripartite NS correlation. The goal of this subsection is to understand whether the presence of such a powerful correlation resource can circumvent the classical impossibility of OT over shared noisy channels.

We consider the strongest possible assistance scenario, in which the NS-box is shared among both senders and the receiver and may be arbitrarily nontrivial, while the encoders are deterministic functions of the senders’ files. Under these assumptions, we show that perfect OT cannot be achieved: any protocol that enables correct decoding of the selected messages necessarily violates the privacy of at least one sender.

The following theorem formalizes this impossibility result.
\begin{theorem}
Consider a two-sender DM-MAC: $ W : \mathcal{X}_1 \times \mathcal{X}_2 \to \mathcal{Y} $. Let the NS-box inputs be deterministic functions\footnote{Note that $f_i$ is not an extra encoder on top of the NS-box.
They just model how each party chooses its box input $\imath_i ,i\in\{1,2\}$ as a deterministic function of its messages.} $ \imath_1 = f_1 (M_{10},M_{11}) \in \{0,1\}^{n}, \imath_2 = f_2 (M_{20},M_{21}) \in\{0,1\}^{n} $, respectively,
and Bob observes the channel output $ y \in \mathcal{Y} $.
Suppose that all three parties share a tripartite NS correlation
$ P(x_1, x_2, (\jmath_1 \jmath_2) \mid \imath_1, \imath_2, y) $,  
satisfying the standard NS constraints \eqref{NS1}-\eqref{NS3}, then OT is impossible under such an NS assistance.
\end{theorem}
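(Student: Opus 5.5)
We prove the statement by contradiction, taking ``OT is impossible'' to mean that no protocol of the stated form achieves perfect correctness \eqref{goals: MAC-nColl-1} together with perfect security for the senders, i.e.\ $I(M_{1\overline{Z}_1},M_{2\overline{Z}_2};V)=0$. The starting point is the causal factorization \eqref{eq: causal}:
\[
P\bigl(x_1x_2(\jmath_1\jmath_2)y\mid \imath_1\imath_2\bigr)=P(x_1x_2\mid\imath_1\imath_2)\,W(y\mid x_1x_2)\,P\bigl((\jmath_1\jmath_2)\mid\imath_1\imath_2\,y\,x_1x_2\bigr).
\]
Since $\imath_i=f_i(M_{i0},M_{i1})$, the pair $(Y^n,J)$ that Bob sees depends on the four files only through $(\imath_1,\imath_2)$. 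By the data-processing inequality,
\[
I(M_{1\overline{Z}_1},M_{2\overline{Z}_2};V)\le I(M_{1\overline{Z}_1},M_{2\overline{Z}_2};\imath_1,\imath_2,Z,\mathbf{C},R_B).
\]
The goal is to show that correctness forces this quantity to be positive.

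\textbf{Step 1 (correctness makes the box nontrivial in the relevant direction).} Correctness requires Bob's output to determine $M_{1Z_1}$ and $M_{2Z_2}$ from $V$. Fix $Z=z$ and two file tuples that differ in $M_{1z_1}$. Their box inputs $\imath_1$ then differ, and the distributions of $(Y^n,J)$ under these inputs must be distinguishable. If they were not, the argument of Definition~\ref{def: non-trivial NS} would make Bob's view independent of $(\imath_1,\imath_2)$, and decoding would be impossible. Hence the map $(\imath_1,\imath_2)\mapsto P_{Y^nJ\mid\imath_1\imath_2}$ must be injective on the relevant inputs. In perfect form, the supports must be disjoint: for every file tuple, the distinct values of $M_{1z_1}$ must produce disjoint supports.

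\textbf{Step 2 (the unselected file leaks through the same inputs).} The map $f_1$ does not depend on $Z$, and Bob's choice $z$ enters only Bob's side of the box, through $y$ and the $J$-marginal. By \eqref{NS3}, the distribution of the sender outputs $(x_1,x_2)$, and hence of $Y^n$, is the same for all $z$. So the function $f_1$ must separate $M_{10}$, to serve $z_1=0$, and must also separate $M_{11}$, to serve $z_1=1$. Fix $z_1=0$ and compare file tuples that agree on $M_{10}$ but differ in $M_{11}$. Two cases arise:
\begin{itemize}
\item If $f_1$ maps them to the same input, then $\imath_1$ carries no information about $M_{11}$, and correctness for the choice $z_1=1$ fails.
\item If $f_1$ maps them to different inputs, then the laws of $Y^n$ given these inputs are identical for every $z$. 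This holds because they coincide when $z_1=1$, where Step 1 makes them separable.
\end{itemize}
In the second case Bob can also separate them when $z_1=0$. Therefore $I(M_{11};V\mid Z_1=0)>0$, which contradicts \eqref{eq: Bob sec-NS}/\eqref{goals: MAC-nColl-2}.

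\textbf{Main obstacle.} The delicate point is the claim that distinguishability of $Y^n$ does not depend on $z$. Bob may use the $J$-outputs, whose conditional law depends on $y$ and hence on $z$. I would handle this by bounding Bob's information from below using $Y^n$ and the transcript alone, whose law is $z$-independent by \eqref{NS3} and the causal chain. Then I would argue that $J$ contributes at most $I(I_1,I_2;J\mid Y)$ beyond $Y^n$, and that this term cannot carry information about $M_{1z_1}$ while hiding $M_{1\overline{z}_1}$, because $\imath_1$ is a fixed function of both files.

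If this fails in general, I would fall back on a two-input reduction: take a one-bit sub-protocol, consider $P(\jmath\mid \imath_1\imath_2 y)$ at the two values of $y$, and use linearity of NS boxes together with \eqref{NS1}--\eqref{NS2} to show that perfect separation of $M_{1z}$ for both values of $z$ forces separation of $M_{1\overline{z}}$ at a fixed $z$.
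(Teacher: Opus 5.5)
\textbf{The gap.} The transfer from $z_1=1$ to $z_1=0$ in Step~2 is the step you flag but do not close, and the patch you propose for it is false.

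Step~1 gives, at $z_1=1$, separability of the law of Bob's whole view $(Y^n,J,\mathbf{C})$ under two tuples differing in $M_{11}$. But the only $z$-independence you establish, via \eqref{NS3}, is that of $Y^n$. Whatever separation is carried by $J$ or $\mathbf{C}$ need not survive at $z_1=0$.

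Your claim that $J$ cannot reveal $M_{1z_1}$ while hiding $M_{1\overline{z}_1}$ ``because $\imath_1$ is a fixed function of both files'' fails in the reading you adopt, where $z$ enters Bob's side of the box. The Wolf--Wullschleger PR-box construction is a counterexample:
\begin{itemize}
\item take one-bit files, $\imath_1=M_{10}\oplus M_{11}$, and Bob's box input $z_1$;
\item use a box with $x_1$ uniform and $x_1\oplus\jmath_1=\imath_1 z_1$, which satisfies \eqref{NS1}--\eqref{NS3};
\item let Alice-1 send the public message $M_{10}\oplus x_1$.
\end{itemize}
Bob then computes $(M_{10}\oplus x_1)\oplus\jmath_1=M_{1z_1}$, while $\jmath_1$ is uniform and independent of the files. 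Doing the same for Alice-2, over a MAC whose output ignores $(x_1,x_2)$, yields perfect OT with perfect security for all parties. So under your reading the statement is false, and neither your patch nor the linearity fallback can work.

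\textbf{How to repair it.} What saves the approach is the actual model. Bob's box input is the channel output itself, along $(\imath_1,\imath_2)\to(x_1,x_2)\to y\to(\jmath_1,\jmath_2)$, so $z$ never enters the box. With $\imath_i=f_i(M_{i0},M_{i1})$, the joint law of $(Y^n,J)$ given the files is $z$-independent simply because $z$ appears nowhere in its generation.

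You must also assume the transcript is generated independently of $Z$. Neither \eqref{NS3} nor the causal chain constrains public messages, and $z$-dependent public discussion is exactly how classical OT is extracted from noisy channels. Under that assumption, Bob can run his $z_1=1$ decoder on his $z_1=0$ observation and recover $M_{11}$ with zero error, giving $I(M_{1\overline{Z}_1};V)=H(M_{1\overline{Z}_1})$. This makes your case analysis on $f_1$ unnecessary; as written, that analysis also overlooks public messages that depend on the files directly.

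\textbf{Minor points.} The opening data-processing bound points the wrong way for a lower bound on leakage. The contradiction is with \eqref{goals: MAC-nColl-2}, not with \eqref{eq: Bob sec-NS}, which is Bob's security.

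\textbf{Comparison with the paper.} The repaired route differs from the paper's. The paper goes from nontriviality (Definition~\ref{def: non-trivial NS}) to a dependence of Bob's view on $(\imath_1,\imath_2)$, and never uses correctness for both choice values. Nontriviality by itself only supplies some distinguishable input pair, whereas your route produces two tuples that differ only in an unselected file.
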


\begin{proof}
By the definition of the NS-box, the decoding law of Bob is
given by the conditional \eqref{eq: causal}:
\[
P((\jmath_1 \jmath_2) \mid \imath_1 \imath_2 y x_1 x_2)
  = \frac{P(x_1 x_2 (\jmath_1 \jmath_2) \mid \imath_1 \imath_2 y)}
         {P(x_1 x_2 \mid \imath_1 \imath_2)}.
\]
Also, we have,
\begin{align}
    P(x_1 x_2(\jmath_1\jmath_2)\,y \mid \imath_1 \imath_2)=P(x_1 x_2(\jmath_1\jmath_2) \mid \imath_1 \imath_2 y) \,W(y\mid x_1,x_2).
\end{align}
By the definition of conditional probability and marginalization,
\begin{align}
    P(y\mid \imath_1\imath_2)
    &= \sum_{x_1x_2\jmath_1\jmath_2} P\bigl(x_1x_2(\jmath_1\jmath_2)y \mid \imath_1\imath_2\bigr) \notag\\
    &=
    \sum_{x_1x_2\jmath_1\jmath_2}
    P\bigl(x_1x_2(\jmath_1\jmath_2)\mid \imath_1\imath_2\,y\bigr)\, W(y\mid x_1,x_2).
\end{align}
This conditional law generally depends jointly on both inputs $ \imath_1 $
and $ \imath_2 $, because the NS condition only ensures
invariance after marginalization over one party’s outputs.

\smallskip
However, in a secure OT protocol, the privacy constraint
requires that Bob’s view of the channel output be independent of each
unselected sender’s message:
\begin{align}\label{eq: leak1}
P(y \mid \imath_1, \imath_2) &= P(y \mid \imath_1', \imath_2), \quad
  \forall \imath_1,\imath_1',\imath_2,\\
  \label{eq: leak2}
P(y \mid \imath_1, \imath_2) &= P(y \mid \imath_1, \imath_2'), \quad
  \forall \imath_2,\imath_2',\imath_1. 
\end{align}
But by the general structure of the NS correlation-assisted MAC,
there exist $ (\imath_1\imath_2,\imath_1',\imath_2',y) $ such that
\[
P(y \mid \imath_1, \imath_2) \neq P(y \mid \imath_1', \imath_2),
\]
since the decoding law depends jointly on $ (\imath_1, \imath_2) $
through the ratio
$ P(x_1 x_2 (\jmath_1 \jmath_2) \mid \imath_1 \imath_2 y) / P(x_1 x_2 \mid \imath_1 \imath_2) $.
This implies that Bob’s decoding output is statistically correlated with
both senders’ inputs simultaneously. Consequently, the receiver can infer information about both senders’
messages, violating the OT privacy conditions \eqref{eq: leak1}–\eqref{eq: leak2}.
The contradiction shows that even in the presence of a
tripartite NS correlation between all parties,
no protocol can achieve OT over a MAC channel. As we stated before, for $i\in\{1,2\}$ Alice-$i$ has two files $M_{i0},M_{i1}$.
Bob chooses selection bits $z = (z_1,z_2)\in\{0,1\}^2$.
Each sender uses a (deterministic) encoder $\imath_i = f_i(M_{i0},M_{i1})$,
which is the input they feed to the NS-box (or the MAC coding index).
Let $V$ denote Bob's entire \emph{final} view (this may include the MAC outputs $Y$, NS-box outputs $J=(\jmath_1,\jmath_2)$, public transcript $\mathbf{C}$, and Bob's local randomness).  For a single round, we denote Bob's observable from the NS assisted MAC by $v=(y,j)$; in the $n$-round case, $V = (Y^n, J, Z, \mathbf{C})$.

For OT secrecy we require that, for each sender $i$ and for every fixed choice of the selection bits $Z_1,Z_2$, Bob's view is independent of the \emph{unselected} file $M_{i\overline{Z}_i}$ when the \emph{selected} file $M_{iZ_i}$ is held fixed.  Formally, for every $i\in\{1,2\}$, every choice bits vector $z=(z_1,z_2)$, and every two message tuples
\[
m=(m_{10},m_{11},m_{20},m_{21}),\qquad m'=(m'_{10},m'_{11},m'_{20},m'_{21}),
\]
such that
\[
M_{iz_i} = m_{iz_i},\quad\text{(the chosen file for Alice-$i$ is equal)}
\]
and
\[
m_{k\ell}=m'_{k\ell}\quad\text{for all } (k,\ell)\neq (i,\overline{Z}_i)
\quad\text{(all other files equal)},
\]
The senders' secrecy \eqref{eq: Alice sec-NS} requires
\begin{equation}\label{eq:message-secrecy}
P\bigl(V=v \mid M=m,\; Z=z\bigr)
\;=\;
P\bigl(V=v \mid M=m',\; Z=z\bigr)
\quad\text{for all } v.
\end{equation}
This means that, when the chosen files and all other files coincide, changing the unselected file must not affect Bob's view. Because the encoders are deterministic, $\imath_i=f_i(M_{i0},M_{i1})$, the condition \eqref{eq:message-secrecy} is equivalent to the following constraint on the induced conditional distributions of Bob's observables:
for every $z$ and for every pair of encoder inputs $(\imath_1\imath_2)$ and $(\imath'_1,\imath'_2)$ that arise from message tuples $m,m'$ satisfying the identicality conditions above (same chosen files, all other files equal), we must have
\begin{equation}\label{eq:encoder-secrecy}
P\bigl(V=v \mid \imath_1\imath_2,Z=z\bigr)
\;=\;
P\bigl(V=v \mid \imath'_1\imath'_2,Z=z\bigr)
\quad\text{for all } v.
\end{equation}
The left-hand side can be written explicitly via the NS-box and MAC composition; e.g. for a single round with MAC law $W(y\mid x_1,x_2)$ and NS kernel $P(x_1x_2(\jmath_1\jmath_2)\mid \imath_1\imath_2y)$,

For fixed sender inputs $(\imath_1\imath_2)$,
the joint behaviour of the NS-box and the MAC is given by
\begin{equation}\label{eq:joint}
P(x_1x_2(\jmath_1\jmath_2),y\mid \imath_1\imath_2)
= P(x_1x_2(\jmath_1\jmath_2)\mid \imath_1\imath_2y)\, W(y\mid x_1,x_2).
\end{equation}
Hence Bob’s marginal distribution on the observable pair $(y,(\jmath_1\jmath_2))$ is
\begin{equation}\label{eq:marginal}
P(y,(\jmath_1\jmath_2)\mid \imath_1\imath_2)
= \sum_{x_1,x_2}
P(x_1x_2(\jmath_1\jmath_2)\mid \imath_1\imath_2y)\, W(y\mid x_1,x_2).
\end{equation}

Including his choice $z$ and possible public transcript $\mathbf{C}$, overall distribution of Bob's full view is therefore
\begin{align}\label{eq:fullview}
    P(V=v\mid \imath_1\imath_2)
    & = \delta_{Z=z}\;
    P(\mathbf{C}=c\mid \imath_1\imath_2,z)\;
    P(y,(\jmath_1\jmath_2)\mid \imath_1\imath_2)\notag\\
    &= 
    \delta_{Z=z}\;
    P(\mathbf{C}=c\mid \imath_1\imath_2,z)\;
    \sum_{x_1,x_2}
    P(x_1x_2(\jmath_1\jmath_2)\mid \imath_1\imath_2y)\, W(y\mid x_1,x_2),
\end{align}
where $\delta_{Z=z}$ simply fixes the choice bits in Bob’s view to the selected value.
From the nontriviality of the NS-box (Definition \ref{def: non-trivial NS}), this violates Alices' secrecy \eqref{eq:encoder-secrecy}.
\end{proof}

We have the following proposition that states that only trivial NS-boxes can not violate OT secrecy via DM-MACs.
\begin{proposition}\label{prop1}
Let $P(x_1x_2(\jmath_1\jmath_2)\mid \imath_1\imath_2y)$ be a tripartite NS kernel
and let the MAC be $W(y\mid x_1,x_2)$.  Consider the induced distribution of Bob's view
$V=(Z,Y,(\jmath_1\jmath_2),\mathbf{C})$ conditioned on the encoder inputs $(\imath_1\imath_2)$ and Bob's choice $Z=z$.
If the following universal encoder-level secrecy holds:
\begin{equation}\label{eq: NS Alice's secrecy}
\forall\ \imath_1,\imath_2,\imath_1',\imath_2',\ \forall v,\qquad
P(V=v\mid \imath_1\imath_2,Z=z)=P(V=v\mid\imath_1'\imath_2',Z=z),
\end{equation}
then Bob's marginal distribution is independent of the senders' inputs; equivalently, the NS-assisted MAC resource is \emph{trivial}. Conversely, if the NS-assisted MAC is nontrivial, then \eqref{eq: NS Alice's secrecy} is false.
\end{proposition}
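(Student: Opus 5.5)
The plan is to prove the forward direction by marginalization and to obtain the converse as its contrapositive, so that nothing beyond the composition formulas \eqref{eq:joint}--\eqref{eq:fullview} is needed.

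Assume \eqref{eq: NS Alice's secrecy} holds for every $v=(z,y,(\jmath_1\jmath_2),c)$. Sum both sides over the public transcript $c$; the choice bits are fixed by $\delta_{Z=z}$. This gives equality of Bob's marginal on $(y,(\jmath_1\jmath_2))$:
\begin{equation*}
P\bigl(y,(\jmath_1\jmath_2)\mid \imath_1\imath_2\bigr)=P\bigl(y,(\jmath_1\jmath_2)\mid \imath_1'\imath_2'\bigr)\qquad \forall\, y,\jmath_1,\jmath_2,\ \forall\,(\imath_1\imath_2),(\imath_1'\imath_2').
\end{equation*}
Summing further over $(\jmath_1\jmath_2)$ shows that $P(y\mid\imath_1\imath_2)$ is also input-independent. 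Dividing, wherever $P(y\mid \imath_1\imath_2)>0$, yields
\begin{equation*}
P\bigl((\jmath_1\jmath_2)\mid \imath_1\imath_2\,y\bigr)=P\bigl((\jmath_1\jmath_2)\mid \imath_1'\imath_2'\,y\bigr).
\end{equation*}
This is exactly the negation of \eqref{non-trivial NS}. Equivalently, in information-theoretic form, $I(I_1,I_2;Y,J)=0$ for every input distribution. The chain rule then gives $I(I_1,I_2;J\mid Y)=0$, which is the triviality criterion of Definition~\ref{def: non-trivial NS}.

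The main obstacle is the identification of the operational conditional $P((\jmath_1\jmath_2)\mid \imath_1\imath_2 y)$, obtained after composing with $W$, with the box's kernel marginal appearing in Definition~\ref{def: non-trivial NS}. By \eqref{NS3} together with the causal decomposition \eqref{eq: causal}, the box's receiver marginal evaluated at input $y$ is what Bob samples once $y$ is produced by the channel. Hence the conditional law of $J$ given $(\imath_1\imath_2,y)$ in the composed experiment is
\begin{equation*}
\sum_{x_1x_2}P\bigl(x_1x_2(\jmath_1\jmath_2)\mid\imath_1\imath_2 y\bigr)W(y\mid x_1x_2)\big/P(y\mid\imath_1\imath_2).
\end{equation*}
I would argue that the definition of nontriviality is meant relative to this operational law, interpreting \eqref{non-trivial NS} and $I(I_1,I_2;J\mid Y)$ as computed under the joint law \eqref{eq:joint}. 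Under that reading the two notions coincide.

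A further point concerns values of $y$ with zero probability under the composed experiment, where the conditional is undefined. I would handle these by restricting the triviality claim to the support of $Y$, or by noting that such $y$ never enter Bob's view.

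The converse follows immediately by contraposition. If the NS-assisted MAC is nontrivial, there exist $(\imath_1\imath_2)\neq(\imath_1'\imath_2')$ and some $y$ at which Bob's conditional output laws differ. Multiplying by the common $P(y\mid\cdot)$ gives distinct marginals on $(y,(\jmath_1\jmath_2))$. Hence, for some $v$ obtained by lifting with any $c$ and $z$, the two sides of \eqref{eq: NS Alice's secrecy} differ, so \eqref{eq: NS Alice's secrecy} is false.
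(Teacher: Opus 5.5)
Your proposal is correct and follows the same route as the paper: marginalize the secrecy condition over the transcript to get that $P\bigl(y,(\jmath_1\jmath_2)\mid\imath_1\imath_2\bigr)$ does not depend on $(\imath_1\imath_2)$, read triviality off this, and get the converse by contraposition, with nontriviality understood (as the paper's own converse does) as input-dependence of this composed law. The identification issue you flag is real and in fact decisive: \eqref{NS1}--\eqref{NS2} already force the kernel marginal $\sum_{x_1x_2}P\bigl(x_1x_2(\jmath_1\jmath_2)\mid\imath_1\imath_2y\bigr)$ to be input-independent, so the operational reading is the only one under which the statement has content; note, however, that your appeal to \eqref{NS3} at that step is misplaced, since Bob samples from $P\bigl((\jmath_1\jmath_2)\mid\imath_1\imath_2yx_1x_2\bigr)$ and the composed conditional is the $W$-weighted average you wrote, not the kernel marginal.
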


\begin{proof}
Assume \eqref{eq: NS Alice's secrecy} holds. Marginalizing (or simply restricting to the observable part), we obtain, for all $\imath_1,\imath_2,\imath_1',\imath_2'$ and for every $y,(\jmath_1\jmath_2)$,
\[
P(y,(\jmath_1\jmath_2)\mid \imath_1\imath_2,Z=z)
= P(y,(\jmath_1\jmath_2)\mid \imath_1',\imath_2',Z=z).
\]
Dropping the irrelevant conditioning on $Z=z$ (it is fixed in the equality) we write
\[
P(y,(\jmath_1\jmath_2)\mid \imath_1\imath_2)
= P(y,(\jmath_1\jmath_2)\mid \imath_1',\imath_2').
\]
Hence the marginal distribution of Bob's observable $(Y,(\jmath_1\jmath_2))$ is the same for every encoder-input pair; i.e.
\begin{align}\label{eq: 1}
    P(y,(\jmath_1\jmath_2)\mid \imath_1\imath_2) \equiv Q(y,(\jmath_1\jmath_2))
\quad\text{(independent of }\imath_1\imath_2).
\end{align}

Using the composition (factorization) of the NS kernel with the MAC,
\[
P(x_1,x_2,(\jmath_1\jmath_2),y\mid \imath_1\imath_2)
= P(x_1,x_2,(\jmath_1\jmath_2)\mid \imath_1\imath_2y)\;W(y\mid x_1,x_2),
\]
and marginalizing over $x_1,x_2$ we obtain
\begin{align}\label{eq: 2}
    Q(y,(\jmath_1\jmath_2))
= \sum_{x_1,x_2} P(x_1,x_2,(\jmath_1\jmath_2)\mid \imath_1\imath_2y)\;W(y\mid x_1,x_2),
\end{align}
for every $(\imath_1\imath_2)$.

Now \eqref{eq: 2} says that the right-hand side does not depend on $(\imath_1\imath_2)$. In particular,
the marginal distribution of $Y$ given $(\imath_1\imath_2)$ is independent of $(\imath_1\imath_2)$:
\[
P(Y=y\mid \imath_1\imath_2) = \sum_{\jmath_1,\jmath_2} Q(y,(\jmath_1\jmath_2)) \quad\text{(independent of $\imath$)}.
\]
Thus Bob's marginal $P(Y\mid \imath_1\imath_2)$ is constant in $(\imath_1\imath_2)$. Similarly,
$P((\jmath_1\jmath_2)\mid \imath_1\imath_2y)$ must be such that the sum \eqref{eq: 2} is independent of $(\imath_1\imath_2)$.

This is exactly the definition of a \emph{trivial} resource in our context: Bob's distribution does not depend on the senders' inputs, hence the NS-assisted MAC provides no dependence/correlation useful for distinguishing different encoder inputs. Conversely, if the NS-assisted MAC is nontrivial — i.e. there exist $(\imath_1\imath_2)\neq(\imath_1',\imath_2')$ with
\[
P(y,(\jmath_1\jmath_2)\mid \imath_1\imath_2)\neq P(y,(\jmath_1\jmath_2)\mid \imath_1'\imath_2'),
\]
then \eqref{eq: NS Alice's secrecy} is false.
\end{proof}
\begin{remark}
Proposition \ref{prop1} characterizes when Bob’s entire view is independent of the senders’ encoder inputs, namely when the NS-assisted MAC resource is trivial.  
However, this condition is strictly stronger than the secrecy requirement of oblivious transfer. In an OT protocol, secrecy is conditional in terms of reliability: Bob must not obtain information about the \emph{unselected} messages, while still being able to reliably decode the selected ones.  
In contrast, the universal encoder-level secrecy \eqref{eq: NS Alice's secrecy} requires Bob’s view to be independent of \emph{all} encoder inputs, which precludes any reliable communication. Consequently, if the NS-assisted MAC is trivial, then correctness fails, and OT is impossible.  
If the NS-assisted MAC is nontrivial, then \eqref{eq: NS Alice's secrecy} necessarily fails, implying that Bob can distinguish some encoder inputs, which violates perfect SfA. Thus, perfect OT is impossible regardless of whether the tripartite NS assistance is trivial or nontrivial.  
Proposition \ref{prop1} should therefore be understood as a characterization of triviality, not as an achievability result for OT.
\end{remark}

\begin{proposition}
Consider a two-sender DM-MAC: $ W : \mathcal{X}_1 \times \mathcal{X}_2 \to \mathcal{Y} $. Let the NS-box inputs be deterministic functions$ \imath_1 = f_1 (M_{10},M_{11}) \in \{0,1\}^{n}, \imath_2 = f_2 (M_{20},M_{21}) \in\{0,1\}^{n} $, respectively,
and Bob observes the channel output $ y \in \mathcal{Y} $.
Suppose that all three parties share a non-trivial/trivial tripartite NS correlation
$ P(x_1, x_2, (\jmath_1 \jmath_2) \mid \imath_1, \imath_2, y) $,
satisfying the standard NS constraints \eqref{eq: NSM1}-\eqref{eq: NSM3}, then OT is impossible under such an NS assistance.
\end{proposition}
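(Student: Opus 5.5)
The plan is a case split on whether the tripartite NS-box is trivial or non-trivial in the sense of Definition \ref{def: non-trivial NS}. In each case I show that one of the OT requirements fails, so no protocol of the stated form meets correctness and perfect SfA at the same time. Throughout I use the causal factorization \eqref{eq: causal} together with \eqref{eq:joint}--\eqref{eq:marginal}. These write Bob's observable law $P(y,(\jmath_1\jmath_2)\mid \imath_1\imath_2)$ as $\sum_{x_1,x_2}P(x_1x_2(\jmath_1\jmath_2)\mid\imath_1\imath_2y)W(y\mid x_1,x_2)$, where $P(x_1x_2\mid\imath_1\imath_2y)$ does not depend on $y$ by \eqref{eq: NSM3}.

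\emph{Trivial case.} Suppose $I(I_1,I_2;J\mid Y)=0$, so the kernel factorizes as $P(x_1\mid\imath_1)P(x_2\mid\imath_2)P((\jmath_1\jmath_2)\mid y)$. Then $J$ is a post-processing of $Y$ by Bob alone. The chain $(M_{10},M_{11},M_{20},M_{21})\to(\imath_1,\imath_2)\to(X_1,X_2)\to Y\to J$ is a Markov chain, so the box adds nothing beyond local randomness on each side. The scheme is therefore an ordinary OT protocol over $W$ with locally randomized encoders. Next I invoke Proposition \ref{prop1} and the remark after it, in the direction that matters here. Correctness \eqref{goals: MAC-nColl-1} needs $P(y,(\jmath_1\jmath_2)\mid\imath_1\imath_2)$ to separate encoder inputs that differ in a selected file. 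Perfect secrecy \eqref{eq:encoder-secrecy} needs it to be constant on encoder inputs that differ only in unselected files. The encoders $f_i$ are deterministic functions of the file pair and do not depend on $Z$, which Alice never learns. Hence the same kernel $P(V\mid\imath_1\imath_2,Z=z)$ must be constant in $M_{i1}$ when $z_i=0$ and constant in $M_{i0}$ when $z_i=1$. Under the trivial box, Bob's channel-side behaviour is independent of $z$. Combining the two secrecy constraints over both values of $z_i$ then makes Bob's observable law independent of $(M_{i0},M_{i1})$ altogether, which contradicts correctness.

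\emph{Non-trivial case.} Here the second half of Proposition \ref{prop1} applies directly: there are $(\imath_1\imath_2)\neq(\imath_1'\imath_2')$ with different Bob marginals, so the universal secrecy \eqref{eq: NS Alice's secrecy} fails. To turn this into a failure of OT secrecy, I would use the same $z$-independence argument. The map $m\mapsto P(V\mid f_1(m_1),f_2(m_2),z)$ must be invariant under changes of the unselected file for every $z$. Running through the four choices of $z\in\{0,1\}^2$ and chaining the equalities \eqref{eq:message-secrecy} forces invariance under all file changes, i.e.\ the universal condition \eqref{eq: NS Alice's secrecy} restricted to the image of $(f_1,f_2)$. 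The non-trivial box contradicts this whenever the distinguishing pair lies in that image. If it does not, the box restricted to the used inputs is effectively trivial, and we return to the first case.

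The main obstacle is the chaining step. Bob's view does contain $z$, and the public transcript $\mathbf{C}$ can depend on $z$, so it is not immediate that the observable kernel is $z$-independent. I would handle this by first conditioning on the transcript. Since Alice's inputs are deterministic in her files and the box is non-signaling in the sense of \eqref{eq: NSM1}--\eqref{eq: NSM2}, Bob's input $y$ and choice $z$ cannot affect $(x_1,x_2)$. So $P(y\mid\imath_1\imath_2)$ is the same for every $z$. The chaining argument then applies at least to the marginal of $Y$, and by \eqref{eq: leak1}--\eqref{eq: leak2} that marginal already has to be constant for secrecy and non-constant for correctness.
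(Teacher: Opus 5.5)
The gap is where you suspected it, in the treatment of $\mathbf{C}$, and your proposed repair does not close it.

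\textbf{Why the chaining step needs more than you have.} Combining perfect secrecy for $z_i=0$ and for $z_i=1$ is valid only if the law of Bob's non-choice observables $O=(Y^n,J,\mathbf{C})$, given the files, is the same for every $z$. Your trivial case assumes this silently (``Bob's channel-side behaviour is independent of $z$''). Once Bob's public messages depend on $z$ and the senders respond to them, this fails. Conditioning on $\mathbf{C}=c$ does not restore it: Bob's announcements are functions of $(Y^n,z)$, so the law of $Y^n$ given $c$ depends on $z$.

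\textbf{Why the fallback fails.} The first half is right: $P(y\mid\imath_1\imath_2)$ is $z$-free, so chaining makes it constant in the files. The second half, that correctness needs this marginal to be non-constant, is false. Correctness only asks that $M_{iZ_i}$ be recoverable from Bob's whole view. In noisy-channel OT the files travel over $\mathbf{C}$, masked by channel-derived keys, while the channel inputs carry no file information.

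For instance, let the box give each Alice-$i$ a uniform $x_i$ independent of $\imath_i$ and give Bob a constant output; this satisfies \eqref{NS1}--\eqref{NS3} for any $f_i$. Let $W$ consist of two independent erasure channels with erasure probability $1/2$, so $Y=(Y_1,Y_2)$. Then $P(y\mid\imath_1\imath_2)$ is constant. Now run the following protocol:
\begin{enumerate}
  \item For each $i$, Bob picks a random set of $k_i$ unerased positions and a random set of $k_i$ erased positions of $Y_i^n$, announces them as $(S_{i0},S_{i1})$ ordered according to $z_i$, and aborts if there are too few of either.
  \item Alice-$i$ publishes $M_{i0}\oplus X_i|_{S_{i0}}$ and $M_{i1}\oplus X_i|_{S_{i1}}$.
\end{enumerate}
The error probability vanishes for any fixed $k_i/n<1/2$, and the leakage to every party is exactly zero. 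With the channel $(b_0,b_1)\mapsto(C,b_C)$, $C$ uniform, and Bob announcing $z_i\oplus C$, one even gets zero error. So, in the model as defined, this step cannot be closed without an explicit hypothesis excluding $z$-dependent interaction.

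\textbf{Comparison with the paper.} The paper does not resolve this either; it declares that $\mathbf{C}$ can be ignored. It then argues entropically rather than by a trivial/non-trivial split:
\begin{itemize}
  \item it applies \eqref{eq: NSM1} to get $I(I_1;V\mid I_2,Y)=0$;
  \item it transfers this to $(M_{10},M_{11})$ through the determinism of $f_1$;
  \item it lets correctness (exact, or via Fano) force $H(M_{1Z_1}\mid I_2,Y)$ to vanish, which it reads as contradicting the causality of OT, while a strict monotonicity inequality is read as a failure of perfect SfA-1.
\end{itemize}
Your $z$-independence chaining is a different and more transparent form of that causality obstruction. If you state the no-interaction assumption as a hypothesis rather than trying to derive it, your argument becomes complete. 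It then needs no case split, since invariance under all file changes already contradicts correctness whatever the box. It also extends to the asymptotic regime by combining secrecy at $z_1=1$ with Fano's inequality at $z_1=0$.

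Your route also avoids the paper's use of the box-level identity \eqref{eq: NSM1} inside the composed system, where $Y$ is generated from $(X_1,X_2)$ and the identity can fail. For example, let Alice-1 and Bob share a uniform bit $\lambda$, with Alice-1's box output $x_1=\imath_1\oplus\lambda$ and Bob's output $\lambda$. This box is non-signaling, yet over $Y=X_1$ Bob reads off $\imath_1$.

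\textbf{A smaller point.} The factorization you quote in the trivial case does not follow from $I(I_1,I_2;J\mid Y)=0$. A box that hands the same uniform bit to all three parties is a counterexample. Nothing in your argument actually uses the factorization, so you can simply drop it.
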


\begin{proof}
    Consider \eqref{goals: MAC-nColl-2}:
    \begin{align}
            I (M_{1\overline{Z}_1},M_{2\overline{Z}_2} ; V) & \,= I (M_{1\overline{Z}_1},M_{2\overline{Z}_2} ; Z, Y, J, \mathbf{C})\notag\\
            & \,= I (M_{1\overline{Z}_1} ; Z, Y, J, \mathbf{C})\notag + I (M_{2\overline{Z}_2} ; Z, Y, J, \mathbf{C}|M_{1\overline{Z}_1})\notag\\
            & \overset{(a)}{=} I (M_{1\overline{Z}_1} ; Z, Y, J, \mathbf{C})\notag + I (M_{2\overline{Z}_2} ; Z, Y, J, \mathbf{C}, M_{1\overline{Z}_1})\notag\\
            & \overset{(b)}{=} I (M_{1\overline{Z}_1} ; Z, Y, J, \mathbf{C}) + I (M_{2\overline{Z}_2} ; Z, Y, J, \mathbf{C})\notag\\
            & = I (M_{1\overline{Z}_1} ; V)+I (M_{2\overline{Z}_2} ; V)\notag\\
            & = 0,\label{MAC-nColl-middle}
        \end{align}
        where $(a)$ is due to the independency of $M_{1\overline{Z}_1}$ from $M_{2\overline{Z}_2}$, and $(b)$ is due the Markov chain $M_{2\overline{Z}_2}-(Z,Y, J, \mathbf{C})-M_{1\overline{Z}_1}$. Then, $I (M_{1\overline{Z}_1} ; V) = 0$.
    
    Since the NS correlation does not have any impact on the public channel transmission, then,  we ignore $\mathbf{C}$ as the total public transmission transcript. Consider the first NS-box criterion \eqref{eq: NSM1}, $I(I_1;X_2,J|I_2,Y)=0$:
    \begin{align}\label{eq:midd}
        I(I_1;X_2,J|I_2,Y) & = I(I_1;J|I_2,Y)\notag\\
        & \quad + I(I_1;X_2|I_2,Y,J)\notag\\
        & = 0.
    \end{align}
Knowing that $V=(J,Z,Y^n)$, we have:
\begin{align}
    I(I_1;V|I_2,Y) &= I(I_1;J,Z,Y|I_2,Y)\notag\\
    & = I(I_1;J|I_2,Y)\notag\\
    & \quad+ I(I_1;Z|I_2,Y,J)\notag\\
    & \quad+ I(I_1;Y|I_2,Y,J,Z)\notag\\
    & = 0,
\end{align}
where the last equality follows from \eqref{eq:midd} and the fact that the last two terms are equal to zero.
Data processing from non-signaling and deterministic encoding forms the following Markov chain: $(M_{1Z_1}, M_{1\overline{Z}_1})-I_1-I_2,Y-V$.
since $I_1$ is a deterministic function of $(M_{1Z_1},M_{1\overline Z_1})$, the
NS condition implies
\[
I(M_{1Z_1},M_{1\overline Z_1};V\mid I_2,Y)=0,
\]
or in entropy form
\begin{equation}\label{eq:1}
    H(M_{1Z_1},M_{1\overline Z_1}\mid I_2,Y) = H(M_{1Z_1},M_{1\overline Z_1}\mid V,I_2,Y).
\end{equation}
Chain-rule expansions of \eqref{eq:1}:
\begin{align}\label{eq:1'}
    H(M_{1Z_1}\mid I_2,Y)+H(M_{1\overline Z_1}\mid M_{1Z_1},I_2,Y)
=
H(M_{1Z_1}\mid V,I_2,Y)+H(M_{1\overline Z_1}\mid M_{1Z_1},V,I_2,Y).
\end{align}

\textit{(i) Zero-error transmission:} Correctness implies $H(M_{1Z_1}\mid V,I_2,Y)=0$.
Then, \eqref{eq:1'} implies:
\begin{align}
    H(M_{1Z_1}\mid I_2,Y)+H(M_{1\overline Z_1}\mid M_{1Z_1},I_2,Y) &= H(M_{1\overline Z_1}\mid M_{1Z_1},V,I_2,Y)\notag\\ 
    & \leq H(M_{1\overline Z_1}\mid M_{1Z_1},I_2,Y),
\end{align}
hence
\begin{equation}\label{eq:1''}
    H(M_{1Z_1}\mid I_2,Y)=0.
\end{equation}
This shows the selected message $M_{1Z_1}$ is already determined by $(I_2,Y)$,
contradicting nontrivial Alice-1 security unless messages are degenerate.

If $H(M_{1Z_1}\mid I_2,Y)=0$, 
then nontrivial Alice-1 secrecy
\[
I(M_{1\overline Z_1};V)=0,
\]
is violated, except in the degenerate case where the non-selected message has zero entropy (i.e. is deterministic). Consider \eqref{eq:1'}:
\begin{equation*}
    H(M_{1Z_1}\mid I_2,Y)+H(M_{1\overline Z_1}\mid M_{1Z_1},I_2,Y) = H(M_{1Z_1}\mid V,I_2,Y)+H(M_{1\overline Z_1}\mid M_{1Z_1},V,I_2,Y).
\end{equation*}
Since $H(M_{1Z_1}\mid V,I_2,Y)=0$, we obtain:
\begin{align}\label{eq:2}
    H(M_{1Z_1}\mid I_2,Y)+H(M_{1\overline Z_1}\mid M_{1Z_1},I_2,Y) = H(M_{1\overline Z_1}\mid M_{1Z_1},V,I_2,Y).
\end{align}
\noindent From \eqref{eq:1''} we have $H(M_{1Z_1}\mid I_2,Y)=0$, Then:
\begin{align}\label{eq:2'}
    H(M_{1\overline Z_1}\mid M_{1Z_1},I_2,Y)=H(M_{1\overline Z_1}\mid M_{1Z_1},V,I_2,Y).
\end{align}
On the other hand, conditioning does not increase entropy,
\[
H(M_{1\overline Z_1}\mid M_{1Z_1},V,I_2,Y)\le H(M_{1\overline Z_1}\mid M_{1Z_1},I_2,Y),
\]
and \eqref{eq:2'} therefore forces equality in this monotonicity. Equality in conditional-entropy monotonicity implies that $V$ (equivalently the part of $V$ beyond $(I_2,Y,M_{1Z_1})$) carries no additional information about $M_{1\overline Z_1}$. Since $Y\in V$, a direct consequence is that, whenever $(I_2,Y)$ fixes $M_{1Z_1}$ (which it does by \eqref{eq:1''}, the remaining uncertainty of $M_{1\overline Z_1}$ given $(I_2,Y)$ equals the remaining uncertainty given $(I_2,Y,V)$. On the other hand, non-triviality of the NS-box implies that $I(I_1,I_2;J|Y)>0$ and $I_i = f_i (M_{i0}, M_{i1}), i\in\{1,2\}$. Considering $V = (J,Z,Y^n)$, we have:
\begin{align}
    I(I_1,I_2;J|Y) & \,= I(I_1;J|Y) + I(I_1;J|I_2,Y)\notag\\
    & \stackrel{(a)}{=} I(I_1;J|Y)\notag\\
    \label{eq: midd'}
    & \stackrel{(b)}{\leq} I(I_1;V|Y)\\
    & \,>0,
\end{align}
where $(a)$ follows from \eqref{eq: NSM1} and $(b)$ follows from data processing inequality. Consider \eqref{eq: midd'}:
\begin{align}\label{eq: midd''}
    I(I_1;V|Y) & \stackrel{(a)}{=} I(M_{1Z_1},M_{1\overline{Z}_1};V|Y)\notag\\
    & \,= H(M_{1Z_1},M_{1\overline{Z}_1}|Y) - H(M_{1Z_1},M_{1\overline{Z}_1}|V, Y)\notag\\
    & \,=H(M_{1Z_1}|Y) + H(M_{1\overline{Z}_1}|M_{1Z_1}, Y) - H(M_{1Z_1}|V, Y) - H(M_{1\overline{Z}_1}|M_{1Z_1}, V, Y)\notag\\
    & \,>0,
\end{align}
where $(a)$ follows from this fact that $I_1$ is a deterministic function of $M_{10}, M_{11}$. Since conditioning does not increase entropy, $H(M_{1\overline{Z}_1}|M_{1Z_1}, Y) \geq H(M_{1\overline{Z}_1}|M_{1Z_1},V, Y)$, combined with \eqref{eq: midd''}, we have $H(M_{1\overline{Z}_1}|M_{1Z_1}, Y) > H(M_{1\overline{Z}_1}|M_{1Z_1},V, Y)$. However, this contradicts \eqref{eq:2'}, meaning that, conditioning on $V$ leaks information about Alice’s unselected message, and Alice-1 security is violated. 

\textit{(ii) Asymptotic correctness:} Consider \eqref{goals: MAC-nColl-1}. Applying Fanno's inequality gives:
\begin{align}\label{cor-MAC-asym}
    H(M_{1Z_1},M_{2Z_2}|V)\leq h(\varepsilon_n) + \varepsilon_n \log_2|\mathcal{M}|,
\end{align}
where $M\triangleq (M_{1Z_1},M_{2Z_2})\in\{0,1\}^{k_1+k_2}$, and $\log_2|\mathcal{M}| = \log_2 2^{k_1+k_2} = k_1+k_2$. Then, \eqref{cor-MAC-asym} can be written as:
\begin{align}\label{cor-MAC-asym-2}
    H(M_{1Z_1},M_{2Z_2}|V)\leq \eta_n,
\end{align}
where $\eta_n \triangleq h(\varepsilon_n) + \varepsilon_n (k_1+k_2)$. 
Substituting \eqref{cor-MAC-asym-2} into \eqref{eq:1'}, gives:
\begin{align}\label{eq:1'''}
    H(M_{1Z_1}\mid I_2,Y)+H(M_{1\overline Z_1}\mid M_{1Z_1},I_2,Y) \leq H(M_{1\overline Z_1}\mid M_{1Z_1},V,I_2,Y) + \eta_n.
\end{align}
On the other hand,
$
H(M_{1\overline Z_1}\mid M_{1Z_1},V,I_2,Y)
\leq
H(M_{1\overline Z_1}\mid M_{1Z_1},I_2,Y).
$
If equality holds in this monotonicity (i.e., condition SfA-1 is satisfied), then
$
H(M_{1Z_1}\mid I_2,Y_1)\leq \eta_n,
$
where $\eta_n \to 0$ as $n \to \infty$. This violates the causality\footnote{OT is a \emph{causal} cryptographic primitive in the sense that the receiver obtains a message determined by a locally generated choice variable. Accordingly, the receiver’s final reconstructed message depends on this choice, and the causality is reflected in the delivered message $M_{iZ_i}$, $i\in\{1,2\}$. The causal dependence appears at the level of the final decoding: the reconstructed message is a function of the receiver’s view and his choice. In entropy terms, this implies $H(M_{iZ_i}\mid V_i,Z_i)=0$, while no analogous identity holds for intermediate variables such as the NS-box output $J_i$.} of OT. If the inequality is strict, then Alice-1's perfect secrecy is violated. The argument for Alice-2 follows analogously.
This completes the proof.
\end{proof}

\subsection{Asymptotic OT over DM-MAC under a nontrivial tripartite NS-box}
While the impossibility result of the previous subsection rules out perfect OT, one may still hope to achieve OT with asymptotically vanishing error and leakage by allowing blocklength to grow. In particular, it is natural to ask whether repeated use of a nontrivial tripartite NS-assisted MAC could enable asymptotic OT with negligible information leakage.

In this subsection, we show that this is not the case. Even when reliability is required only asymptotically and the protocol is allowed to operate over many independent channel uses, any nontrivial tripartite NS-box induces a distinguishability between encoder inputs that amplifies under repetition. As a consequence, the receiver can asymptotically learn information about unselected messages.

The main result of this subsection establishes the impossibility of asymptotic OT under these conditions.
\begin{theorem}[Impossibility of asymptotic OT with a nontrivial tripartite NS-box]
Assume the parties share a tripartite NS-box $P(x_1x_2(\jmath_1\jmath_2)\mid\imath_1,\imath_2,y)$ that is nontrivial in the sense above, the encoders $\imath_i=f_i(M_{i0},M_{i1})$ are deterministic functions of the senders' files, and the parties use independent repetitions of the same single-use kernel (product across rounds). Then there exists no sequence of protocols that simultaneously satisfies reliability and the senders-security asymptotically.
\end{theorem}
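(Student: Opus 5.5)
The plan is to argue by contradiction. Suppose there is a sequence of $(n,k_1,k_2)$ NS-assisted protocols satisfying \eqref{goals: MAC-nColl-1}--\eqref{goals: MAC-nColl-3}, with the box used as $n$ independent copies of the single-use kernel. By Proposition~\ref{prop1} and nontriviality, there is a single-use pair of encoder-input values $(\imath_1\imath_2)\neq(\imath_1',\imath_2')$ at which Bob's single-use observable law differs:
\[
\delta \triangleq \mathbb{V}\bigl(P(y,(\jmath_1\jmath_2)\mid\imath_1\imath_2),\,P(y,(\jmath_1\jmath_2)\mid\imath_1'\imath_2')\bigr)>0 .
\]
By the NS condition \eqref{eq: NSM1}, $I(I_1;J\mid I_2,Y)=0$. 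Hence any nontriviality visible to Bob must come through the joint dependence on $(I_1,I_2)$, in the way exploited in the decomposition \eqref{eq: midd'}. I will therefore fix Alice-2's input and a coordinate where $I_1$ changes, so that $\delta>0$ is attributable to Alice-1's input alone. This reduction is only needed for one sender; the argument for Alice-2 is symmetric.

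The amplification step comes next. Because the encoders are deterministic, $\imath_1^n=f_1(M_{10},M_{11})$. Take two message tuples $m,m'$ that agree on the selected file and on all of Alice-2's files, but differ in the unselected file $M_{1\overline{Z}_1}$. I will need these to induce encoder words that differ, position by position, in a number $d_n$ of positions where the single-letter laws are $\delta$-separated.

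Under the product structure, Bob's view restricted to $(Y^n,J^n)$ has law $\prod_l P(\cdot\mid \imath_{1,l}\imath_{2,l})$. The Hellinger affinity of two product measures is the product of the single-letter affinities, so the affinity is at most $(1-c\delta^2)^{d_n}$. Consequently the total variation between Bob's views under $m$ and $m'$ is at least $1-(1-c\delta^2)^{d_n}$, which tends to $1$ whenever $d_n\to\infty$.

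I then convert this distinguishability into leakage. With $M_{1\overline{Z}_1}$ uniform, a Bob test that distinguishes two of its values with advantage bounded away from zero, averaged over a positive-probability set of message pairs, yields $I(M_{1\overline{Z}_1};V)\ge \kappa>0$ via Pinsker's inequality applied in reverse (mutual information lower-bounded by squared variation between conditional laws). This contradicts \eqref{goals: MAC-nColl-2}. Public communication $\mathbf{C}$ cannot erase this: conditioning Bob's view on extra variables only refines it, so $I(M_{1\overline{Z}_1};V)\ge I(M_{1\overline{Z}_1};Y^n,J^n\mid Z)$ up to the transcript's own dependence.

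The main obstacle is guaranteeing $d_n\to\infty$, namely that the unselected file really changes Alice-1's box inputs in many coordinates. The plan is to use correctness together with deterministic encoding. If $M_{1\overline{Z}_1}$ affected only $o(n)$ coordinates, or none, then $\imath_1^n$ would be essentially a function of $M_{1Z_1}$. Since the choice bit $Z_1$ is unknown to Alice-1, by \eqref{goals: MAC-nColl-3} the same must hold with the roles swapped. Then, reproducing the entropy argument of the preceding proposition (case (ii), the consequence $H(M_{1Z_1}\mid I_2,Y)\le\eta_n$), the box input would carry too little of both files, so reliable decoding at rate $R_1>0$ via Fano would fail. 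I therefore expect a counting step to show that, for a positive fraction of file pairs, the encoder words differ in $\Omega(n)$ positions, and this is where the argument is most delicate.
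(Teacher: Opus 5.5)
\textbf{The gap.} The gap is the step you flag as delicate: showing that tuples differing only in $M_{1\overline{Z}_1}$ yield encoder words whose single-letter laws of Bob's observable $(Y,J)$ are separated in $d_n\to\infty$ coordinates, for a positive fraction of pairs. The Hellinger bound and the leakage step both rest on this, and your sketch does not establish it.

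First, the sketch relies on $H(M_{1Z_1}\mid I_2,Y)\le\eta_n$ from the preceding proposition, and your opening move relies on $I(I_1;J\mid I_2,Y)=0$. Both apply \eqref{eq: NSM1} to the composite distribution in which $Y$ is the MAC output. The NS constraints only give this when $y$ is a free box input. For example, take a shared uniform bit $s$, $x_2$ constant, and the local box $x_1=\imath_1\oplus s$, $\jmath=s$ over $Y=X_1$. This box satisfies \eqref{NS1}--\eqref{NS3}, yet $J=Y\oplus I_1$, so $I(I_1;J\mid I_2,Y)=H(I_1)$.

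Second, and decisively, your model allows a public transcript $\mathbf{C}$. In that model correctness does not force the files to influence the box inputs at all. Take $f_1$ constant, and let the box hand Alice-1 i.i.d.\ uniform channel inputs; she sees $X_1^n$. Let the MAC contain a $\mathrm{BEC}(1/2)$ from each sender. Bob announces index sets of size $k_1$, with $S_{Z_1}$ unerased and $S_{\overline{Z}_1}$ erased. Alice-1 replies publicly with $M_{1\ell}\oplus X_1^n|_{S_\ell}$ for $\ell\in\{0,1\}$, and Alice-2 does the same. This meets \eqref{goals: MAC-nColl-1}--\eqref{goals: MAC-nColl-3} at positive rates for honest-but-curious parties. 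Appending the unused one-time-pad component above makes the resource nontrivial, yet $d_n=0$ for every pair of tuples. So in this model your missing lemma is false, and the theorem's conclusion fails as well.

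\textbf{Relation to the paper and how to repair it.} The paper does not prove this step either. It simply posits $m,m'$ that agree on the selected files, differ only in an unselected one, and map to the distinguishing input pair. It then uses that pair in all $n$ rounds, so $d_n=n$ by fiat, and it ignores $\mathbf{C}$. A rigorous version therefore needs explicit restrictions: no $Z$-dependent interaction, and Bob's box input equal to the channel output, as in \eqref{eq: causal}.

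Under those restrictions there is a much shorter route. The law of $(Y^n,J^n)$ given the files is independent of $Z$. Hence correctness for both values of $Z_1$ lets Bob also run his decoder with $\overline{Z}_1$ and recover $M_{1\overline{Z}_1}$ with vanishing error. Fano then gives $I(M_{1\overline{Z}_1};V)\ge(1-o(1))k_1$, with no use of nontriviality or amplification.

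Your remaining ingredients are sound and improve on the paper's:
\begin{enumerate}
\item The triangle inequality makes one of $\mathbb{V}(P_{\cdot\mid\imath_1\imath_2},P_{\cdot\mid\imath_1'\imath_2})$ and $\mathbb{V}(P_{\cdot\mid\imath_1'\imath_2},P_{\cdot\mid\imath_1'\imath_2'})$ at least $\delta/2$. That is the right justification for the single-sender reduction, not \eqref{eq: NSM1}. The reduction is genuinely needed, because tuples differing only in $M_{1\overline{z}_1}$ cannot change $\imath_2$.
\item Your Hellinger bound is valid. The paper's identity $\mathbb{V}(P^{\otimes n},Q^{\otimes n})=1-(1-\varepsilon)^n$ holds only as an upper bound.
\item Your leakage step is ordinary Pinsker ($D\ge2\mathbb{V}^2$ in nats), not a reverse Pinsker.
\end{enumerate}
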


\begin{proof}
Assume there exists a sequence of $(n, k_1, k_2)$-protocols that satisfy reliability and Alice-security \eqref{goals: MAC-nColl-2}. We will construct two message tuples $m,m'$ that differ only in an unselected file and show Bob can distinguish them with probability tending to $1$, contradicting Alice-security.

By nontriviality there exist inputs $(\imath_1^*,\imath_2^*)\neq(\tilde\imath_1,\tilde\imath_2)$ and $y^*$ such that Bob's single-round output distributions  differ by (total variation distance)
\[
\varepsilon := \mathbb{V}\bigl(P_{J\mid \imath_1^*,\imath_2^*,y^*},\; P_{J\mid \tilde\imath_1,\tilde\imath_2,y^*}\bigr) > 0.
\]
(Here $J$ denotes Bob's view for the single round; if the MAC output $Y$ is also observed by Bob we include it in $J$ or treat conditioning on $y^*$ as in the model; the statement is unchanged up to incorporating $Y$ into the observable.)

Because the encoders are deterministic mappings $\imath_i = f_i(M_{i0},M_{i1})$, the senders can choose message tuples that map to any desired pair of encoder inputs in their ranges. In particular, pick two message tuples
\[
m=(m_{10},m_{11},m_{20},m_{21}),\qquad m'=(m'_{10},m'_{11},m'_{20},m'_{21}),
\]
such that
\[
f_1(m_{10},m_{11}) = \imath_1^*, \quad f_2(m_{20},m_{21})=\imath_2^*,
\]
and
\[
f_1(m'_{10},m'_{11})=\tilde\imath_1, \quad f_2(m'_{20},m'_{21})=\tilde\imath_2.
\]
(If encoder ranges do not include the desired inputs, the nontrivial pair can be chosen among achievable inputs; the argument only requires the existence of two achievable input pairs that the box distinguishes.)

Moreover we choose $m,m'$ so that they agree on all chosen files for some fixed choice $z$ but differ in an unselected file, so that OT secrecy would require Bob not to be able to distinguish them. Concretely, pick $z$ and suppose $m,m'$ differ only in $M_{1\overline z_1}$ (or $M_{2\overline z_2}$); the argument is identical in either case. By construction, under messages $m$ and $m'$ the encoder inputs to the NS-box are $(\imath_1^*,\imath_2^*)$ and $(\tilde\imath_1,\tilde\imath_2)$ respectively, therefore Bob's single-round observable distributions differ by $\varepsilon$ in total variation:
\[
\mathbb{V}\bigl(P_{J\mid m},P_{J\mid m'}\bigr)\;=\;\varepsilon.
\]

Since the protocol uses independent repetitions (product across rounds), the $n$-round distributions $P_{J^n\mid m}$ and $P_{J^n\mid m'}$ are $n$-fold product distributions of the single-round ones (or at least their marginals on the $J$-components factorize similarly). For product distributions the total variation distance tensorizes as
\[
\mathbb{V}\bigl(P^{\otimes n},Q^{\otimes n}\bigr) \;=\; 1 - (1-\mathbb{V}(P,Q))^n.
\]
Consequently,
\[
\mathbb{V}\bigl(P_{J^n\mid m},P_{J^n\mid m'}\bigr)
= 1 - (1-\varepsilon)^n.
\]
In particular $\mathbb{V}(\cdot)\to 1$ exponentially fast as $n\to\infty$.

Consider the binary hypothesis testing problem where Bob must decide which of the two message tuples $m$ or $m'$ was used, based on observing $J^n$. With equal a priori probabilities for $m$ and $m'$, the minimal probability of error $P_e^{(n)}$ is given by
\[
P_e^{(n)} \;=\; \frac{1}{2}\bigl(1 - \mathbb{V}(P_{J^n\mid m},P_{J^n\mid m'})\bigr).
\]
Equivalently, the maximal probability of correct decision is
\[
p_{\mathrm{corr}}^{(n)} \;=\; 1 - P_e^{(n)} \;=\; \frac{1}{2}\bigl(1 + \mathbb{V}(P_{J^n\mid m},P_{J^n\mid m'})\bigr)
= 1 - \tfrac12 (1-\varepsilon)^n.
\]
Thus $p_{\mathrm{corr}}^{(n)}\to 1$ as $n\to\infty$, and $P_e^{(n)}\to 0$.

Define the binary random variable $B$ that equals $0$ if the message tuple is $m$ and $1$ if it is $m'$, with equal prior probability $1/2$. By the standard relation between binary hypothesis-testing error and mutual information for an equiprobable binary $B$ and observation $J^n$,
\[
I(B;J^n) = H(B) - H(B\mid J^n) = 1 - H_2(P_e^{(n)}),
\]
where $H_2(\cdot)$ is the binary entropy in bits and $H(B)=1$ bit. Since $P_e^{(n)}\to 0$, we have $H_2(P_e^{(n)})\to 0$, and therefore
\[
\lim_{n\to\infty}I(B;J^n) = 1.
\]
Because $B$ is a deterministic function of the unselected file(s) (we chose $m,m'$ so they differ only in an unselected file), we have the data processing lower bound
\[
I\bigl(M_{1\overline Z_1},M_{2\overline Z_2}; J^n\bigr) \;\ge\; I(B;J^n)\xrightarrow[n\to\infty]{} 1.
\]
Hence the mutual information between the unselected file(s) and Bob's $n$-round view is bounded below by a positive constant (in fact at least approaching 1 bit), which contradicts the Alice-security requirement that
\[
\lim_{n\to\infty} I\bigl(M_{1\overline Z_1},M_{2\overline Z_2}; V^{(n)}\bigr) = 0.
\]
(Here $V^{(n)}$ includes $J^n$ and possibly more information available to Bob, so $I(\cdot;J^n)\le I(\cdot;V^{(n)})$; the same contradiction therefore holds even more strongly for $V^{(n)}$.)

Any nontrivial tripartite NS-box gives rise to a single-round distinguishability $\varepsilon>0$ between certain encoder inputs; because the encoders map messages to inputs deterministically, one can arrange two message tuples that differ only in an unselected file but produce distinguishable Bob distributions; repetition amplifies this distinguishability so Bob learns the unselected file asymptotically with probability tending to one. Thus Alice-security (asymptotic vanishing leakage) is impossible. This completes the proof.
\end{proof}

\subsection{On Bob's security in the presence of a tripartite NS-box}
The impossibility results above focus on the violation of sender security. This naturally raises the complementary question of whether SfB is also subject to a universal impossibility when tripartite non-signaling assistance is available.

In contrast to sender security, we show that no such universal limitation exists for Bob’s security. Depending on how public communication and NS-box outputs are used within the protocol, Bob’s selection bits may or may not be leaked. Thus, SfB is a protocol-dependent property rather than an inherent limitation imposed by the NS-assisted channel model.

The following theorem illustrates this distinction.

\begin{theorem}[No universal impossibility for Bob-security] There is no universal impossibility result on Bob's security in an OT scheme in the presence of a nontrivial tripartite NS-box: Bob-security can hold for some protocols and fail for others.
\end{theorem}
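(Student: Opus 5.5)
The statement is existential in both directions, so I will prove it by exhibiting two explicit protocols, both using the same nontrivial tripartite NS-box over the same DM-MAC. In the first, security for Bob \eqref{goals: MAC-nColl-3} holds exactly. In the second, it fails by a constant amount. Sender security and correctness are not at issue here: by the preceding theorems they cannot both hold anyway. The point is only that the SfB condition $I(Z_i;U_i)\to 0$ is neither forced nor forbidden by the NS-assisted channel model.

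The structural fact I will use is the NS property. By \eqref{NS3} and the causal decomposition \eqref{eq: causal}, we have $P(x_1x_2\mid \imath_1\imath_2 y)=P(x_1x_2\mid\imath_1\imath_2)$. Hence Alice-$i$'s local box output and her channel input, conditioned on her own input $\imath_i$, do not depend on Bob's box input $y$. By \eqref{NS1}--\eqref{NS2}, they also do not depend on the other sender's input.

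\textbf{Positive example.} Bob's selection bits $Z=(Z_1,Z_2)$ enter only his local NS-box input $y_l$ and his local decoding. Bob sends nothing on the public channel, so $\mathbf{C}$ is generated by the Alices alone, or is empty. Alice-$i$'s view is $U_i=(M_{i0},M_{i1},R_{A_i},X_i^n,\mathbf{C})$. I will argue inductively over rounds $l$ that the joint law of $(U_i)$ given $Z=z$ is independent of $z$. At each round, $X_{i,l}$ is drawn from the sender-side marginal of the box, which by \eqref{eq: NSM3} satisfies $I(Y;X_1,X_2\mid I_1,I_2)=0$. There is no feedback from Bob, since the channel is a MAC with no output at the senders. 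Chaining these facts gives the Markov chain $Z - (M,R_A) - U_i$, and since $Z$ is independent of the messages and of $R_A$, this yields $I(Z_i;U_i)=0$ for every $n$.

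\textbf{Negative example.} Take the same box, but let Bob announce $y_1$ (or directly $Z_1$) on the public channel $\mathbf{C}$ in the first round, with $y_1=Z_1$. Then $Z_1$ is a function of $\mathbf{C}\subseteq U_1$, so $I(Z_1;U_1)=H(Z_1)=1$ bit for uniform $Z_1$, and SfB fails for all $n$. This is legitimate under Definition~\ref{def: OT-MAC}, because each public message may depend on the sender's input. The existence of both protocols establishes the theorem.

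The main obstacle is the Markov-chain step in the positive example when the protocol is interactive across rounds. Alice-$i$'s later box inputs $\imath_{i,l}$ may depend on her earlier box outputs, so I must show that the NS condition survives sequential composition. I will handle this the way the paper treats repeated uses in the asymptotic theorem: each round is an independent use of the single-round kernel. Conditioning on Alice's past view, the round-$l$ sender marginal is still $y_l$-independent by \eqref{NS3}. A chain rule $I(Z;U_i)=\sum_l I(Z;U_{i,l}\mid U_i^{l-1})$ then has each term equal to zero.
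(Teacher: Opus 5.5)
Your proposal is correct and follows essentially the same route as the paper. The positive example is a protocol in which nothing Bob sends depends on $Z$, so that by the non-signaling condition \eqref{NS3} Alice-$i$'s view $U_i$ is independent of $Z$; the negative example is the same protocol with Bob publicly announcing $Z_1$. Your round-by-round treatment of adaptively chosen box inputs is more careful than the paper's direct appeal to the product-form marginal, but the argument is the same.
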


\begin{proof}
Consider the following setting:
\begin{enumerate}
  \item All public transcript messages $\mathbf{C}$ are generated exclusively as functions of the senders' local randomness and messages (and previous public messages), but \emph{not} of $Z$ or of any Bob information;
  \item Encoders $\imath_i^n=f_i^{(n)}(M_{i0},M_{i1})$ are deterministic functions of the senders' files only and never depend on $Z$ (i.e.\ senders never adapt to Bob);
  \item No additional side channels (no MAC feedback used adaptively by senders that depends on $Z$).
\end{enumerate}

For the $n$-round protocol, denote the global conditional distribution
given the messages $M=m$ and Bob's choice $Z=z$ by
\begin{align}
\label{eq:global-dist}
P_{\mathrm{global}}(&x_1^n,x_2^n,j^n,y^n,c \mid M=m,Z=z) = \notag\\
  &\Biggl[\prod_{t=1}^n
      P\bigl(x_{1t},x_{2t},(\jmath_{1t}\jmath_{2t})\mid
      \imath_{1t},\imath_{2t},y_t\bigr)\,
      W(y_t\mid x_{1t},x_{2t})
  \Biggr]
  P(\mathbf{C}=c\mid \text{protocol inputs}),
\end{align}
where $\mathbf{C}$ denotes the public transcript random variable,
and $\imath_i^n=f_i^{(n)}(M_{i0},M_{i1})$ are the deterministic encoder inputs. The local view of Alice-$i$ over $n$ rounds is $U_i^{(n)}=(M_{i0},M_{i1},\imath_i^n,X_i^n,\mathbf{C})$ where $X_i^n=(x_{i1},\ldots,x_{in})$ is the sequence of her MAC inputs.

\vspace{3pt}
To obtain the marginal distribution of Alice-$i$'s view conditioned on $(M,Z)$,
we must sum over all variables that she cannot observe.
These are precisely the variables of the other sender, Bob,
and the channel outputs, i.e. $\text{hidden variables:}\quad x_{\overline{i}}^n,\ j^n,\ y^n$.

The full expression for Alice-$i$'s conditional marginal therefore reads:
\begin{align}
\label{eq:alice-marginal-explicit}
P(U_i^{(n)}&=u_i \mid M=m, Z=z) =
\nonumber\\
&\sum_{x_{\overline{i}}^n,\jmath^n,y^n}
   \Biggl[
   \prod_{t=1}^n
     P\bigl(x_{1t},x_{2t},(\jmath_{1t}\jmath_{2t})\mid
     \imath_{1t},\imath_{2t},y_t\bigr)
     W(y_t\mid x_{1t},x_{2t})
   \Biggr]\notag\\
   &\qquad\qquad\qquad\qquad\qquad\qquad\qquad\qquad\qquad P(\mathbf{C}=c\mid M=m,Z=z,x_1^n,x_2^n,j^n,y^n).
\end{align}

If the transcript generation law $P(\mathbf{C}\mid\cdot)$ is independent of $Z$ and the NS marginal equalities hold for
$P$, then the right-hand side of
\eqref{eq:alice-marginal-explicit} is independent of $z$,
implying Bob-security for Alice-$i, i\in\{1,2\}$.

Next, let the protocol be identical to the previous one except that, in round 1, after receiving his local NS output $(\jmath_1,\jmath_2)$ (or after computing a deterministic function of $y$), Bob broadcasts a single bit like
\[
B = g(\jmath_1,\jmath_2,y,Z),
\]
where $g$ is chosen so that $B$ depends on $Z$ (for example $B=Z_1$, or $B$ is a hash revealing part of $Z$). This obviously violates Bob's security.
\end{proof}

\subsection{Perfect OT over DM-BC under a nontrivial tripartite NS-box}
We now turn to the DM-BC setting and investigate whether analogous impossibility results hold for network OT over DM-BCs in the presence of tripartite non-signaling correlations.
\begin{theorem}
Consider a DM-BC, $ W : \mathcal{X} \to \mathcal{Y}_1\times \mathcal{Y}_2$. Let the NS-box inputs be deterministic functions $ \imath = f (M_{10},M_{11},M_{20},M_{21}) \in \{0,1\}^{n}$,
and Bob-$i$ observes the channel output $ y_i \in \mathcal{Y}_i, i\in\{1,2\}$.
Suppose that all three parties share a tripartite non-trivial/trivial NS correlation
$ P(x, \jmath_1, \jmath_2 \mid \imath, y_1, y_2) $,
satisfying the standard NS constraints \eqref{eq: NSM1-BC}-\eqref{eq: NSM3-BC}, then OT (Definition \ref{def: OT_NS_BC}) is impossible under such an NS assistance.
\end{theorem}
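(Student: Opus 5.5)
The plan is to transcribe the MAC argument (the preceding proposition together with the asymptotic theorem) to the broadcast setting. The sender and receiver roles are swapped, but the key structural fact survives: Alice's single box input $\imath=f(M_{10},M_{11},M_{20},M_{21})$ is a deterministic function of all four files. In particular it depends on the file $M_{1\overline{Z}_1}$ that Bob-1 must not learn.

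\textbf{Step 1: reduce sender security to a single receiver.} Fix a receiver, say Bob-1, with view $V_1=(Z_1,R_{B_1},Y_1^n,J_1,\mathbf{C})$. Condition \eqref{goals: BC-nColl-2} requires $I(M_{1\overline{Z}_1};V_1)\to 0$, and correctness \eqref{goals: BC-nColl-1} together with Fano's inequality gives
\[
H(M_{1Z_1}\mid V_1)\le \eta_n:=h(\varepsilon_n)+\varepsilon_n k_1 .
\]

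\textbf{Step 2: use the NS constraints to obtain an entropy identity.} Two facts come from the constraints:
\begin{itemize}
\item by \eqref{eq: NSM3-BC}, the pair $(X,J_1)$ is independent of $Y_2$ given $(I,Y_1)$;
\item by \eqref{eq: NSM1-BC}, $I$ is independent of $(J_1,J_2)$ given $(Y_1,Y_2)$.
\end{itemize}
Combining these with deterministic encoding, I would derive an analogue of \eqref{eq:1} of the form
\[
H(M_{10},M_{11}\mid Y_1,Y_2)=H(M_{10},M_{11}\mid V_1,Y_1,Y_2).
\]
This is obtained by marginalizing over the Bob-2 and $M_2$ components. I would then expand both sides with the chain rule exactly as in \eqref{eq:1'}.

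\textbf{Step 3: the perfect (zero-error) case.} Set $H(M_{1Z_1}\mid V_1,Y_1,Y_2)=0$. The chain-rule expansion then forces
\[
H(M_{1Z_1}\mid Y_1,Y_2)=0,
\]
and it also forces equality in the monotonicity for $M_{1\overline{Z}_1}$.

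Next, split into the trivial and non-trivial cases of Definition~\ref{def: NS-BC-triv}.
\begin{itemize}
\item \emph{Trivial box.} By \eqref{eq:mi_trivial}, $(X,J_1,J_2)$ is independent of $I$ given $(Y_1,Y_2)$. Bob-1's view then depends on the files only through the BC output, so the question reduces to the unassisted setting. The derived identity $H(M_{1Z_1}\mid Y_1,Y_2)=0$, which says the selected file is determined before any choice-dependent decoding, contradicts the causality of OT. The argument is the same as in the MAC proof.
\item \emph{Non-trivial box.} \eqref{eq:prob_nontrivial} gives $I(I;X,J_1,J_2\mid Y_1,Y_2)>0$. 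Because $I$ is a deterministic function of the files, this becomes a strict positive dependence on $(M_{10},M_{11})$ visible to Bob-1. That contradicts the forced equality, exactly as \eqref{eq: midd''} contradicted \eqref{eq:2'}.
\end{itemize}

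\textbf{Step 4: the asymptotic case.} Replace the zero terms by $\eta_n$. The same dichotomy holds: either the monotonicity is strict, and SfA fails, or $H(M_{1Z_1}\mid Y_1,Y_2)\le\eta_n$, which violates causality. Optionally, for product repetitions, I would add the total-variation tensorization argument of the asymptotic MAC theorem: pick two file tuples that differ only in $M_{1\overline{z}_1}$ and map to box inputs that the box distinguishes, and conclude that the leakage tends to 1 bit. Bob-2 is handled symmetrically.

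\textbf{Main obstacle.} The hardest step is the transfer in Step 2 and its non-trivial case. Non-triviality in the BC definition is a statement about the joint output $(X,J_1,J_2)$, not about Bob-1's own marginal. Meanwhile, the NS constraints allow Alice's input to influence the pair $(J_1,J_2)$ only jointly, and \eqref{eq: NSM1-BC} literally says $I(I;J_1,J_2\mid Y_1,Y_2)=0$. The dependence therefore has to pass through $X$ and the channel into $Y_1$. I would first try to show that $I(I;Y_1,J_1)>0$ by combining $X$'s dependence on $I$ with the channel law. Failing that, I would restrict the statement to boxes where the $I$-dependence reaches Bob-1's marginal, as the MAC proof implicitly does. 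I would also need to choose file pairs that differ only in an unselected file yet map to distinguishable inputs, handling ranges of $f$ as in the parenthetical remark of the asymptotic MAC proof.
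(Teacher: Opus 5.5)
Your Steps 1--2 and the first half of Step~3 reproduce the paper's argument. The paper also uses \eqref{eq: NSM1-BC}, together with the independence of $Z_1$, to get $I(M_{1Z_1},M_{1\overline{Z}_1};V_1\mid Y_1,Y_2)=0$; you do not need \eqref{eq: NSM3-BC}. It then expands by the chain rule and inserts zero-error correctness or Fano's bound to reach $H(M_{1Z_1}\mid Y_1,Y_2)=0$ (resp.\ $\le\eta_n$). The genuine gap is in your non-trivial branch. There you want a ``strict positive dependence on $(M_{10},M_{11})$ visible to Bob-1'' to contradict the forced equality $H(M_{1\overline{Z}_1}\mid M_{1Z_1},Y_1,Y_2)=H(M_{1\overline{Z}_1}\mid M_{1Z_1},V_1,Y_1,Y_2)$. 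That equality is conditioned on $(Y_1,Y_2)$, and your own Step~2 identity already says $I(M_{10},M_{11};V_1\mid Y_1,Y_2)=0$. So no such conditional dependence can be exhibited. The reason is that \eqref{eq: NSM1-BC} gives $I(I;J_1,J_2\mid Y_1,Y_2)=0$, so the non-triviality in \eqref{eq:prob_nontrivial} can only show up through $X$, which Bob-1 never observes. A dependence that is not conditioned on $(Y_1,Y_2)$, e.g.\ through $Y_1$, does not contradict an equality that is.

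The MAC template (\eqref{eq: midd''} against \eqref{eq:2'}) therefore does not transfer. In the MAC the NS constraint is conditioned on the other sender's input, which leaves room for $I(I_1;J\mid Y)>0$. In the BC there is a single sender input, and \eqref{eq: NSM1-BC} removes that room entirely. Your fallback of restricting to boxes whose $I$-dependence reaches Bob-1's marginal would prove a weaker theorem than the one stated. The optional total-variation argument in Step~4 fails for the same reason: non-triviality gives no single-round separation of Bob-1's observable. Moreover, nothing guarantees two file tuples that differ only in $M_{1\overline{z}_1}$ yet map to inputs the box distinguishes.

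The missing observation is that no case split is needed. Your derivation of $H(M_{1Z_1}\mid Y_1,Y_2)=0$ in Step~3, and of $H(M_{1Z_1}\mid Y_1,Y_2)\le\eta_n$ in Step~4, never used triviality. Hence the causality contradiction you invoke in the trivial branch closes the non-trivial branch verbatim. This is exactly how the paper handles ``non-trivial/trivial'' in one sweep. After \eqref{eq: after-corr} it argues that either equality holds in the monotonicity, forcing $H(M_{1Z_1}\mid Y_1,Y_2)=0$ and contradicting causality, or the inequality is strict, which is incompatible with \eqref{eq: after-corr} (resp.\ with SfA in the Fano version). It then repeats the argument for Bob-2. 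Since this causality step is the load-bearing claim of the whole proof, state it explicitly as the contradiction rather than routing through non-triviality.
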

\begin{proof}
    Consider the non-signaling from Alice to both receivers expressed in \eqref{eq: NSM1-BC}:
    \begin{align}
        I(I ; J_1,J_2 \mid  Y_1,Y_2) & = I(I ; J_1\mid  Y_1,Y_2) + I(I ; J_2 \mid  J_1,Y_1,Y_2)\notag\\
        & = 0.
    \end{align}
    Then $I(I ; J_1\mid  Y_1,Y_2) = 0$.
    Next, we have the following chain of equalities:
    \begin{align}
        I(I;V_1|Y_1,Y_2) &= I(I;J_1,Z_1,Y_1|Y_1,Y_2)\notag\\
        & = I(I;J_1|Y_1,Y_2) + I(I;Z_1|Y_1,Y_2, J_1) + I(I;Y_1|Y_1,Y_2,J_1,Z_1)\notag\\
        & = 0,
    \end{align}
    where the last equality follows immidiately since $Z_1\perp\!\!\!\! \perp (I, Y_1, Y_2, J_1)$ meaning that Bob-1's choice bit is locally generated and is independent of 
    Alice's messages ($I-(Y_1, Y_2, J_1)-Z_1$).

    Since $I$ is a deterministic function of all messages, we have:
    \begin{align}
        I(I;V_1|Y_1,Y_2) &= I(M_{10},M_{11},M_{20},M_{21};V_1|Y_1,Y_2)\notag\\
        & = I(M_{1Z_1},M_{1\overline{Z}_1},M_{2Z_2},M_{2\overline{Z}_2};V_1|Y_1,Y_2)\notag\\
        & = I(M_{1Z_1},M_{1\overline{Z}_1};V_1|Y_1,Y_2) + I(M_{2Z_2},M_{2\overline{Z}_2};V_1|Y_1,Y_2, M_{1Z_1},M_{1\overline{Z}_1})\notag\\
        & = 0.
    \end{align}
    Then, $I(M_{1Z_1},M_{1\overline{Z}_1};V_1|Y_1,Y_2) = 0$, and we have:
    \begin{align}
        H(M_{1Z_1},M_{1\overline{Z}_1}|Y_1,Y_2) = H(M_{1Z_1},M_{1\overline{Z}_1}|Y_1,Y_2,V_1).
    \end{align}
Expanding both sides implies:
    \begin{align}\label{eq: doubt}
        H(M_{1Z_1}|Y_1,Y_2) + H(M_{1\overline{Z}_1}|Y_1,Y_2, M_{1Z_1}) = H(M_{1Z_1}|Y_1,Y_2,V_1) + H(M_{1\overline{Z}_1}|Y_1,Y_2,V_1, M_{1Z_1}).
    \end{align}
    Correctness (zero-error transmission) of OT requires $H(M_{1Z_1}|V_1) = 0$. Then, $H(M_{1Z_1}|Y_1,Y_2,V_1) = 0$ since conditioning does not increase entropy. Then:
    \begin{align}\label{eq: after-corr}
        H(M_{1Z_1}|Y_1,Y_2) + H(M_{1\overline{Z}_1}|Y_1,Y_2,M_{1Z_1}) = H(M_{1\overline{Z}_1}|Y_1,Y_2,V_1, M_{1Z_1}).
    \end{align}
    On the other hand $H(M_{1\overline{Z}_1}|Y_1,Y_2,V_1, M_{1Z_1}) \leq H(M_{1\overline{Z}_1}|Y_1,Y_2, M_{1Z_1})$. If the equality holds in this monotonicity (satisfying SfA), then \eqref{eq: after-corr} requires $H(M_{1Z_1}|Y_1,Y_2) = 0$, meaning that the chosen message by Bob-1 is completely determined from the BC output and is independent from his final view $V_1$, including his choice bit $Z_1$. This contradicts the causality of OT. If the inequality holds in the monotonicity, then it violates \eqref{eq: after-corr}. 

    In the above proof, we considered zero-error transmission over the noisy BC, a very strict assumption. Now, consider the asymptotic correctness constraint \eqref{goals: BC-nColl-1} $\lim_{n\to\infty}\text{Pr}\{\Hat{M}_{1Z_1} \neq M_{1Z_1}\} = 0$. In other form $\text{Pr}\{\Hat{M}_{1Z_1} \neq M_{1Z_1}\} = \varepsilon_n$, where $\varepsilon_n\to 0$ as $n\to\infty$. Then, Fano's inequality gives:
    \begin{align}
        H(M_{1Z_1}|V_1)\leq h(\varepsilon_n) + \varepsilon_n\log_2|\mathcal{M}_{1Z_1}|.
    \end{align}
    Since $M_{1Z
    _1}\in\{0,1\}^{k_1}$ and $k_1$ is the message length, then:
    \begin{align}\label{ineq}
        H(M_{1Z_1}|V_1)\leq \eta_n,
    \end{align}
    where $\eta_n \triangleq h(\varepsilon_n) + \varepsilon_nk_1$. Now, start with \eqref{eq: doubt}:
    \begin{align}\label{eq: doubt2}
        H(M_{1Z_1}|Y_1,Y_2) + H(M_{1\overline{Z}_1}|Y_1,Y_2, M_{1Z_1}) = H(M_{1Z_1}|Y_1,Y_2,V_1) + H(M_{1\overline{Z}_1}|Y_1,Y_2,V_1, M_{1Z_1}).
    \end{align}
    Substituting \eqref{ineq}, we have:
    \begin{align}\label{eq: doubt3}
        H(M_{1Z_1}|Y_1,Y_2) + H(M_{1\overline{Z}_1}|Y_1,Y_2, M_{1Z_1}) \leq H(M_{1\overline{Z}_1}|Y_1,Y_2,V_1, M_{1Z_1}) + \eta_n.
    \end{align}
    On the other hand $H(M_{1\overline{Z}_1}|Y_1,Y_2,V_1, M_{1Z_1}) \leq H(M_{1\overline{Z}_1}|Y_1,Y_2, M_{1Z_1})$. If the equality holds in this monotonicity (satisfying SfA), then $H(M_{1Z_1}|Y_1,Y_2)\leq\eta_n$, and $\eta_n$ vanishes when $n\to\infty$. This contradicts the causality of OT. If the inequality holds, then it contradicts Alice's perfect secrecy. The proof for Bob-2 follows immediately. 
    This completes the proof.
\end{proof}

\subsection{On impossibility of OT over a bipartite NS-assisted noisy point-to-point DMC}We have shown that there exists no sequence of OT protocols that simultaneously satisfy reliability and the SfA over tripartite NS-assisted DM-MAC and tripartite NS-assisted DM-BC. This constitutes a strong no-go result, as it is completely independent of the noise behavior and noise type of the underlying channels. 

Moreover, this impossibility remains valid beyond the honest-but-curious model and extends to stricter secrecy notions, including bounded malicious parties \cite{Winter1} and unbounded malicious parties \cite{Hadi4}. In this sense, our result establishes an impossibility that is strictly stronger than existing computational impossibility results, since OT is ruled out even in the presence of powerful NS correlations.

Fawzi and Ferm\'e \cite{Fawzi} showed that bipartite NS correlation cannot increase either the capacity of a DMC or the sum-capacity of a DM-MAC. Technically, their result applies to the setting in which NS assistance is shared independently between Alice-1 and Bob, described by $P^1(x_1,\jmath_1 \mid \imath_1,y)$, as well as independently between Alice-2 and Bob, described by $P^2(x_2,\jmath_2 \mid \imath_2,y)$, which we refer to as independent NS assistance. From the perspective of network OT, our impossibility results can be extended to bipartite NS-assisted noisy DMCs and independent NS-assisted DM-MACs. However, the (im)possibility of weaker interactive cryptographic tasks, such as bit-commitment and secret key agreement, under such NS assistance remains an open problem.
\subsection{On OT over a bipartite NS-assisted noisy DM-MAC}
Up to now, we have examined the feasibility of network OT over noisy channels assisted by tripartite NS correlations. We have shown that neither perfect nor imperfect OT can be realized in this setting. In particular, any tripartite NS correlation either violates the SfA, leaks unintended information to the receiver, or contradicts the causal structure inherent to OT.

This leads to a natural and central question: can OT be realized over shared noisy channels assisted by bipartite NS correlations, where the NS resource is shared exclusively between the two transmitters in the MAC model (Figure \ref{NS-OT}-$(b)$) or, alternatively, between the two receivers in the BC model? In this work, we restrict our attention to the MAC model and investigate the case in which the two transmitters are assisted by a bipartite NS correlation.  

\begin{theorem}
\label{thm:bipartiteNS-no-impossibility}
Consider a noisy DM-MAC,
$W:\mathcal X_1\times\mathcal X_2\to\mathcal Y$.
Assume Alice-$1$ and Alice-$2$ share a bipartite non-signaling (NS) box
$P_{X_1X_2|I_1I_2}$ satisfying \eqref{bipartite-NS1} and \eqref{bipartite-NS2},
while the receiver Bob has no access to the NS box.
Then, no universal information-theoretic impossibility of perfect OT
over the DM-MAC can be derived.
\end{theorem}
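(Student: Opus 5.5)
Since the statement is a negative meta-claim (``no universal impossibility can be derived''), I will not try to build an OT protocol in general. I will show that the two mechanisms behind the earlier impossibility results are both missing when the NS box is shared only between Alice-1 and Alice-2. Those earlier proofs (the tripartite MAC theorem, Proposition~\ref{prop1}, and the asymptotic amplification theorem) all rest on one step: the non-signaling identity \eqref{eq: NSM1}, read as $I(I_1;V\mid I_2,Y)=0$, combined with nontriviality $I(I_1,I_2;J\mid Y)>0$, where $J$ is a box output held by Bob. Both ingredients require Bob to be a party to the box. The plan is therefore to (i) isolate exactly where $J$ and the constraint \eqref{NS3} are used, (ii) show that these quantities are undefined or vacuous in the bipartite setting, and (iii) exhibit a concrete instance for which every necessary condition derivable from \eqref{bipartite-NS1}--\eqref{bipartite-NS2} alone is consistent with perfect correctness, SfA and SfB.

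\textbf{Step 1: the chain breaks.} In the bipartite model Bob's view is $V=(Z,Y^n,R_B,\mathbf C)$ and contains no box output. The only NS relations available are $I(I_1;X_2\mid I_2)=0$ and $I(I_2;X_1\mid I_1)=0$. These constrain the senders' local marginals, but they impose nothing on the law of $V$ given $(I_1,I_2)$, because Bob's observation passes through $W(y\mid x_1,x_2)$, which sees both inputs jointly. Consequently the equality \eqref{eq:1}, which drove both the zero-error and the Fano-based contradictions, cannot be derived. Likewise the nontriviality quantity $I(I_1,I_2;J\mid Y)$ has no counterpart. A bipartite box that is nontrivial in the sense of correlating $X_1$ and $X_2$ leaves Bob's distribution as a function of the channel alone, so the amplification argument of the asymptotic theorem has no fixed $\varepsilon>0$ to amplify.

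\textbf{Step 2: the box can be simulated away.} I will observe that any bipartite NS-assisted protocol contains, as the special case of a trivial box $P(x_1x_2\mid\imath_1\imath_2)=P(x_1\mid\imath_1)P(x_2\mid\imath_2)$, every unassisted OT protocol over the DM-MAC (Definition~\ref{def: OT-MAC}). A universal impossibility would therefore also rule out OT over the unassisted DM-MAC. That contradicts known positive results on OT capacity of noisy MACs, for example \cite{Hadi1} and, for the binary adder channel, \cite{Chou}, at least for asymptotic OT. For the perfect case, I will instead exhibit a degenerate instance, such as a channel realizing an ideal erasure-type OT primitive (Rabin's mechanism \cite{Rabin1}) on one sender's link, on which perfect OT of that sender's bit is achievable. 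This shows that no argument using only \eqref{bipartite-NS1}--\eqref{bipartite-NS2} and generic $W$ can yield a contradiction.

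\textbf{Main obstacle.} The hard part is making ``no universal impossibility can be derived'' precise. I plan to interpret it as follows: no contradiction follows from the NS constraints together with correctness and secrecy for every $W$. Under that reading it suffices to produce one consistent instance, and the reduction in Step 2 supplies it. The delicate point is the perfect (zero-error, zero-leakage) regime, where a genuinely noisy $W$ may not admit perfect OT. Here I will rely on choosing $W$ with an erasure structure, so that perfect secrecy and correctness hold conditionally on non-erasure with public announcement, and argue that the bipartite box neither helps nor hurts this construction.
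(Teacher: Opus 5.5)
Your Step~1 is, in substance, the paper's entire proof. The paper observes that \eqref{bipartite-NS1}--\eqref{bipartite-NS2} involve only sender-side variables, so nothing enforces $I(I_i;Y^n\mid I_{\bar i})=0$. Hence the Markov relations behind $H(M_{iZ_i}\mid Y^n,Z,I_{\bar i})=0$ cannot be established, and the entropy identities of the tripartite proofs do not follow. It adds only a remark that collusion trivializes OT, and never exhibits an instance. Your Step~2, producing one concrete (channel, box) pair that admits perfect OT, is a genuinely stronger route and the right way to make the meta-claim precise. The reduction ``product box $=$ unassisted protocol'' is sound.

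One correction to Step~1: it is not true that there is no $\varepsilon>0$ to amplify. The law $P(y\mid\imath_1\imath_2)=\sum_{x_1,x_2}P(x_1x_2\mid\imath_1\imath_2)W(y\mid x_1,x_2)$ must depend on the inputs for correctness. What is missing is any reason that distinguishable input pairs arise from file tuples differing only in an unselected file.

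The gap is the witness you propose for the perfect regime. Rabin's mechanism, i.e.\ a BEC with erasure probability $\delta\in(0,1)$, gives no perfect OT at any fixed blocklength. Guarantees ``conditional on non-erasure'' are also not the unconditional zero-error/zero-leakage requirements. The obstruction runs as follows:
\begin{enumerate}
\item The all-erased pattern has probability $\delta^n>0$ and is independent of all inputs.
\item Conditioned on it, the transcript comes from a noiseless protocol. Its rectangle structure plus perfect correctness for $Z=0$ force every transcript $\mathbf c$ in its support to determine $M_0$ among all file pairs compatible with Alice's side of $\mathbf c$.
\item Perfect SfB forces the conditional law of $\mathbf c$ given the files to be the same under $Z=1$. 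So $\mathbf c$ occurs with positive probability under $Z=1$, where it reveals $M_{\overline{Z}}=M_0$ to Bob, violating perfect SfA.
\end{enumerate}

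A working witness needs a channel with no input-independent outputs. For example, take $W_i$ mapping $(a_0,a_1)\in\{0,1\}^2$ to $(c,a_c)$ with $c$ uniform, on both links of a product MAC. Using one link, as you suggest, gives no positive rate pair. The protocol:
\begin{enumerate}
\item Alice-$i$ inputs uniform $(r_{i0},r_{i1})$. This can be the local output of a product box with uniform marginals, so it is compatible with deterministic box inputs.
\item Bob announces $d_i=Z_i\oplus c_i$.
\item Alice-$i$ announces $m_{i0}\oplus r_{id_i}$ and $m_{i1}\oplus r_{i\overline{d_i}}$.
\end{enumerate}
Bob recovers $M_{iZ_i}$ with zero error. $M_{i\overline{Z}_i}$ is one-time padded by $r_{i\overline{c_i}}$, which Bob never sees, and $d_i$ is uniform and independent of $Z$. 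Use $k$ channel uses bitwise for $k$-bit files.

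Finally, drop the claim that the box ``neither helps nor hurts.'' A nontrivial box driving the channel inputs can leak the pads. For instance, a PR-type box with constant inputs makes $r_1=r_2$, and then Bob learns both of Alice-1's pads whenever $c_1\neq c_2$. Restrict the witness to the product box, which is all the meta-claim requires.
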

\begin{proof}
Let $M_i=(M_{i0},M_{i1})$, $i\in\{1,2\}$, be independent and uniformly distributed over $\{0,1\}^{k_i}$,
and let $Z=(Z_1,Z_2)$ denote Bob’s choice bits.
Let $V=(Y^n,\mathbf{C},Z)$ be Bob’s full view. Perfect OT correctness/zero-error transmission implies
\begin{align}
H(M_{1Z_1},M_{2Z_2}\mid V)=0,
\end{align}
and hence $I(M_{1Z_1},M_{2Z_2};Y^n\mid Z)=H(M_{1Z_1},M_{2Z_2})$. Perfect SfA-$i$ requires $I(M_{i\bar Z_i};V)=0, i=\{1,2\}$.

In the tripartite NS-assisted case, a contradiction follows from
non-signaling constraints that imply
$I(I_i;Y^n,J\mid I_{\bar i})=0$, where $J$ is Bob’s NS output.
Here, however, the only NS constraints are \eqref{bipartite-NS1} and \eqref{bipartite-NS2}, which involve sender-side variables exclusively. Since Bob observes neither $(X_1,X_2)$ nor any NS-box output,
no constraint enforces $I(I_i;Y^n\mid I_{\bar i})=0$.
Consequently, the Markov relations required to deduce
\begin{align}
H(M_{iZ_i}\mid Y^n,Z,I_{\bar i})=0,
\end{align}
from SfA-$i$ cannot be established.

Therefore, the entropy identities used in Proposition \ref{prop1} to force
leakage of $M_{i\bar Z_i}$ do not follow. However, if the parties are allowed to collude, then each sender can reveal the private information to Bob, and this collapses OT requirements. 
\end{proof}
\subsection{NS correlations and the collapse of choice-dependent causality in OT}
Oblivious transfer is inherently a \emph{causal} cryptographic primitive: the receiver’s final reconstructed message depends essentially on a locally generated choice variable. Specifically, Bob-$i$ obtains his estimate $\hat M_{iZ_i}$ as a deterministic function of his total view $V_i$, which includes the outputs of the NS-box and the public transcript, together with his private choice $Z_i$. Formally, $H(\hat M_{iZ_i} \mid V_i) = 0$,
while $\hat M_{iZ_i}$ cannot be reconstructed from $V_i\setminus\{Z_i\} \triangleq (J_i, \mathbf{C}, Y_i^n)$. In an asymptotically correct protocol, $\hat M_{iZ_i}$ coincides with the true message $M_{iZ_i}$ with vanishing error probability, so that $H(M_{iZ_i} \mid V_i) \to 0$.
The causal role of the choice variable thus manifests at the level of the final decoded message, rather than in the raw outputs of the NS-box.

This entropy relation captures a fundamental aspect of OT causality: the receiver’s output cannot be fully determined without knowledge of the choice bit. In other words, the uncertainty in $\hat{M}_{iZ_i}$ necessarily reflects the uncertainty in $Z_i$, and the decoding process must preserve this dependence. Any valid OT protocol must therefore ensure that the receiver’s final output remains informationally and causally tied to the choice variable.

NS correlations fundamentally undermine this causal structure. NS constraints enforce that, conditioned on the channel outputs, the receiver’s entire view—including the final output—cannot carry additional information about the sender’s private input. As a result, any information that enables the receiver to recover the selected message must already be contained in the channel outputs alone. This forces the decoding of $M_{iZ_i}$ to become asymptotically independent of the choice variable $Z_i$, thereby eliminating the causal influence of the choice on the final output.

From an entropic perspective, NS constraints effectively push the system toward a regime in which $H(M_{iZ_i} \mid Y_1,Y_2) \to 0$ (DM-BC),
so that the receiver’s output becomes asymptotically determined by the channel observations. In this regime, the entropy contribution of the choice variable disappears, violating the fundamental OT requirement that the output depends on the choice. To conclude, NS correlations erase the causal pathway through which the receiver’s choice influences the delivered message. While NS correlations can introduce strong nonlocal correlations, they simultaneously enforce a form of informational symmetry that prevents the asymmetric, choice-dependent information flow required by OT. This causal collapse lies at the heart of the impossibility results established in this work.
\subsection{OT from nonlocal primitives versus NS assistance}

A closely related and influential result was established by Wolf and Wullschleger \cite{Wolf}, who study the cryptographic power of the PR non-local box. In their model, two parties have access to a bipartite non-signaling primitive in which each party inputs a single bit and receives a uniformly random output bit, such that the XOR of the outputs equals the AND of the inputs. The central contribution of that work is to show an exact and information-theoretic equivalence between OT and the PR primitive: a single use of a PR-box suffices to implement a perfectly secure instance of 1-out-of-2 OT, and, conversely, a single instance of OT can be used to simulate a PR-box. Both reductions are single-copy, perfect (zero-error), and information-theoretic, and therefore establish the PR-box as a complete primitive for OT in the NS setting. In the PR-based constructions of their work, the PR-box is treated as an abstract bipartite cryptographic primitive, and OT is obtained by a direct reduction using local operations and classical message exchanges between the parties. The model is not embedded into a physical communication network and does not impose any encoder–channel–decoder structure.

At first glance, our impossibility result may appear to contradict earlier work establishing a cryptographic equivalence between OT and PR nonlocality, where it is shown that a single realization of a PR primitive suffices to implement information-theoretically secure OT, and conversely that OT can be used to simulate PR correlations. This apparent tension is resolved by observing that the two settings operate under fundamentally different models of resources and causality. In the PR-based constructions, the PR-box is treated as an abstract bipartite cryptographic primitive whose input--output correlations are directly exploited to generate the OT functionality; the reduction is stand-alone, single-copy, and is carried out in a setting with an otherwise noiseless public channel, without embedding the primitive into a physical communication network. In particular, the PR-box is used as a nonlocal primitive that provides the essential cryptographic correlation, and is not constrained by any intermediate noisy channel or operational encoding--decoding structure. By contrast, in our model, NS correlations are available only as auxiliary resources assisting communication over a physically constrained channel, such as a DM-MAC or DM-BC. All actual information flow is required to pass through the noisy channel, while the NS-box merely provides correlations that cannot convey information by themselves. Consequently, the protocol is subject to a strict operational causal ordering---encoding, channel transmission, and decoding---which is absent in the abstract PR-box framework. Our impossibility results exploit precisely this operational causality: the receiver’s choice in OT must causally influence the decoded message, while non-signaling constraints forbid any backward influence of this choice on the sender-side views. This tension does not arise in PR-based reductions, where the primitive itself already embodies the required nonlocal dependence. Thus, our result does not contradict the PR--OT equivalence, but rather shows that NS correlations, when confined to an assisting role within a causally ordered communication protocol, are insufficient to universally guarantee OT security. This distinction highlights a fundamental difference between nonlocal primitives used as stand-alone cryptographic functionalities and NS resources employed as auxiliary correlations in communication networks.

Other closely related conceptual limitations of NS resources were identified by Short, Gisin, and Popescu in their study of the cryptographic power of nonlocal correlations \cite{Short} and by Buhrman \emph{et al.} \cite{Buhrman}, where they show how non-local
boxes can be used to perform any two-party secure computation. In \cite{Short}, the authors investigate the operational differences between an ideal OT-box and a PR nonlocal box, and show that reproducing the same input--output probability distribution is, in general, insufficient to reproduce the cryptographic functionality of a primitive. In particular, they emphasize that, besides matching the joint distribution, one must also respect the temporal and causal structure of the inputs and outputs of the ideal functionality. For an OT-box, the receiver’s output is only defined after the sender has entered her inputs, whereas for any NS-box, and in particular for a PR-box, the receiver must be able to obtain a valid output independently of whether the other party has already used the box. This fundamental difference implies that, although an OT-box can be simulated from a PR-box and classical communication at the level of distributions, the simulation does not preserve the causal ordering inherent in the OT functionality and is therefore not universally composable. As summarized by Short, Gisin, and Popescu in the slogan ``no signaling---no obligation to perform the measurement until the end---no commitment'', the non-signaling constraint prevents a protocol from enforcing that a party has already fixed a causally relevant local input at an earlier stage of the protocol. Our model and results reflect precisely the same obstruction in a communication-theoretic setting. In the NS-assisted channel model considered here, the NS-box is available only as an auxiliary resource, and all information must flow through a physically constrained broadcast or multiple-access channel. Consequently, the protocol is subject to an explicit operational ordering of encoding, channel transmission, and decoding. The non-signaling constraints then imply that the auxiliary resource cannot enforce that a party’s local choice has already been causally applied before the other parties’ outputs are generated. Hence, although NS correlations may reproduce certain input--output statistics, they cannot, in general, reproduce the causal structure required by OT over a communication network. In this sense, the impossibility results of \cite{Short} for nonlocal simulations provide an operational and composability-based confirmation of the limitations established in our NS-assisted channel model.

The apparent tension between the possibility results of Wolf and Wullschleger and the impossibility results of Short, Gisin, and Popescu is resolved once the role of causality is made explicit. Wolf and Wullschleger establish a functional equivalence between OT and PR nonlocality at the level of input--output correlations: a PR-box can simulate the distribution of an OT-box, and vice versa, without imposing any operational constraints on the timing of the parties’ inputs. In contrast, Short, Gisin, and Popescu show that NS correlations cannot enforce the temporal binding required by commitment-type primitives, since the NS condition forces a party’s output to be well defined independently of whether the other party has already acted. In our model, OT is embedded into a causally ordered communication protocol—encoding, physical channel transmission, and decoding—and therefore implicitly requires a causal dependence between private choices and final outputs. NS assistance, while capable of generating correlations, cannot impose this causal ordering and thus cannot universally guarantee OT security over communication channels. From this perspective, the possibility result of Wolf and Wullschleger and the impossibility result of Short, Gisin, and Popescu are fully compatible: the former concerns the simulation of ideal functionalities at the level of correlations, whereas the latter—and our work—identify the fundamental limitations of NS resources in enforcing causal irreversibility within operationally constrained cryptographic protocols.

The causality imposed by NS-boxes and the causality inherent to OT are closely related but fundamentally distinct. NS enforces a negative, structural notion of causality, prohibiting any dependence of one party’s local output distribution on the other party’s input. OT, by contrast, requires a positive, functional notion of causality: the receiver’s private choice must causally determine the reconstructed message. The key observation underlying our impossibility result is that NS correlations cannot be used to enforce such functional causality. Enforcing that a private choice has already been applied would allow the other party to detect this fact and hence violate non-signalling. Consequently, while NS resources may reproduce the statistics of an OT-box, they cannot impose the causal irreversibility required by OT when embedded in a causally ordered communication protocol. This mismatch between structural and functional causality explains both the possibility results based on abstract OT-box simulations and the impossibility results for NS-assisted communication models.

\section{Discussion}\label{Disc}
We investigated the feasibility of network oblivious transfer over NS–assisted noisy channels, focusing on multiple access channels and broadcast channels. Our results demonstrate that network OT is impossible in the presence of tripartite NS correlations, regardless of whether these correlations are trivial or non-trivial. Since both classical and quantum correlations form strict subsets of the NS correlation set, this result implies that any tripartite entanglement shared between the sender(s) and the receiver(s) necessarily leads to information leakage and, consequently, violates the secrecy requirements of OT.

A central question motivating this work is whether quantum correlations (entanglement), or even super-quantum correlations such as PR-boxes, can be leveraged to enhance the OT capacity of shared noisy channels in network settings. Our findings indicate that, within the considered system models, tripartite NS correlations fundamentally undermine OT security and therefore cannot be used to increase OT capacity.

Nevertheless, our analysis suggests that bipartite NS correlations—specifically, correlations shared between the transmitters in the MAC model or between the receivers in the BC model—may still satisfy the OT security constraints when the parties are not allowed to collude. This problem has been considered in this paper for the MAC model. Whether such bipartite NS correlations can be exploited to improve the OT capacity of the underlying noisy channels remains an open problem. Addressing this question constitutes a natural and important direction for future research.
\section*{Acknowledgment}
The authors acknowledge Omar Fawzi and Matthieu Bloch for valuable discussions during the course of this research.

\bibliography{references}

@techreport{Rabin1,
  author       = {Rabin, Michael O.},
  title        = {How to Exchange Secrets by Oblivious Transfer},
  institution  = {Aiken Comput. Lab., Harvard University},
  address      = {Cambridge, MA},
  number       = {TR-81},
  year         = {1981}
}

@article{EGL,
  author       = {Even, Shimon and Goldreich, Oded and Lempel, Abraham},
  title        = {A Randomized Protocol for Signing Contracts},
  journal      = {Communications of the ACM},
  volume       = {28},
  pages        = {637--647},
  year         = {1985}
}

@inproceedings{Dodis,
  author       = {Dodis, Yevgeniy and Ong, S. J. and Prabhakaran, Manoj and Sahai, Amit},
  title        = {On the (im)possibility of cryptography with imperfect randomness},
  booktitle    = {Proceedings of the 45th Annual IEEE Symposium on Foundations of Computer Science (FOCS)},
  pages        = {196--205},
  year         = {2004},
  doi          = {10.1109/FOCS.2004.44}
}

@article{SV,
  author       = {Santha, Miklos and Vazirani, Umesh V.},
  title        = {Generating quasi-random sequences from semi-random sources},
  journal      = {Journal of Computer and System Sciences},
  volume       = {33},
  number       = {1},
  pages        = {75--87},
  year         = {1986}
}

@inproceedings{Joe,
  author       = {Crépeau, Claude and Kilian, Joe},
  title        = {Achieving oblivious transfer using weakened security assumptions},
  booktitle    = {Proceedings of the 29th IEEE Symposium on Foundations of Computer Science (FOCS)},
  address      = {White Plains, NY, USA},
  pages        = {42--52},
  year         = {1988},
  doi          = {10.1109/SFCS.1988.21920}
}

@article{Winter1,
  author       = {Nascimento, Anderson C. A. and Winter, Andreas},
  title        = {On the oblivious transfer capacity of noisy resources},
  journal      = {IEEE Transactions on Information Theory},
  volume       = {54},
  number       = {6},
  pages        = {2572--2581},
  month        = jun,
  year         = {2008}
}

@inproceedings{Winter3,
  author       = {Nascimento, Anderson C. A. and Winter, Andreas},
  title        = {On the oblivious transfer capacity of noisy correlations},
  booktitle    = {Proceedings of IEEE International Symposium on Information Theory (ISIT)},
  address      = {Seattle, WA, USA},
  pages        = {1871--1875},
  year         = {2006}
}

@inproceedings{Rudolf1,
  author       = {Ahlswede, Rudolf and Csiszár, Imre},
  title        = {On oblivious transfer capacity},
  booktitle    = {Proceedings of IEEE Information Theory Workshop on Network Information Theory (ITW)},
  address      = {Volos, Greece},
  pages        = {1--3},
  year         = {2009}
}

@inproceedings{Crepeau3,
  author       = {Crépeau, Claude and Morozov, Kirill and Wolf, Stefan},
  title        = {Efficient unconditional oblivious transfer from almost any noisy channel},
  booktitle    = {Proceedings of SCN},
  volume       = {3352},
  series       = {Lecture Notes in Computer Science},
  address      = {Berlin, Germany},
  publisher    = {Springer-Verlag},
  pages        = {47--59},
  year         = {2005}
}

@inproceedings{Imai,
  author       = {Imai, Hideki and Morozov, Kirill and Nascimento, Anderson C. A.},
  title        = {On the oblivious transfer capacity of the erasure channel},
  booktitle    = {Proceedings of IEEE International Symposium on Information Theory (ISIT)},
  address      = {Seattle, WA, USA},
  pages        = {1428--1431},
  year         = {2006}
}

@inproceedings{Watanabe2,
  author       = {Suda, Shun and Watanabe, Shun and Yamaguchi, Hiroshi},
  title        = {An improved lower bound on oblivious transfer capacity via interactive erasure emulation},
  booktitle    = {Proceedings of IEEE International Symposium on Information Theory (ISIT)},
  pages        = {1872--1877},
  year         = {2024},
  doi          = {10.1109/ISIT57864.2024.10619607}
}

@misc{Watanabe3,
  author       = {Suda, Shun and Watanabe, Shun},
  title        = {An improved lower bound on oblivious transfer capacity using polarization and interaction},
  howpublished = {arXiv:2501.11883 [cs.IT]},
  year         = {2025}
}

@inproceedings{amir1,
  author    = {Shekofteh, A. and Chou, R. A.},
  title     = {{SPIR} with Colluding and Non-Replicated Servers from a Noisy Channel},
  booktitle = {Proceedings of the 60th Annual Allerton Conference on Communication, Control, and Computing},
  year      = {2024},
  pages     = {1--6}
}

@inproceedings{amir2,
  author    = {Shekofteh, A. and Chou, R. A.},
  title     = {Single-Server {SPIR} over Binary Erasure Channels: Benefits of Noisy Side Information},
  booktitle = {Proceedings of the 61st Annual Allerton Conference on Communication, Control, and Computing},
  year      = {2025},
  pages     = {1--5}
}

@inproceedings{amir3,
  author    = {Shekofteh, Amirhossein and Chou, R\'emi A.},
  title     = {Improved Achievable Rate for Single-Server {SPIR} over Binary Erasure Channels},
  booktitle = {Proceedings of the 61st Annual Allerton Conference on Communication, Control, and Computing},
  year      = {2025},
  publisher = {Allerton Conference on Communication, Control, and Computing},
  address   = {Urbana, IL},
  doi       = {}
}

@inproceedings{Hadi2,
  author       = {Aghaee, Hamed and Deppe, Christina},
  title        = {On oblivious transfer capacity of noisy multiple access channel},
  booktitle    = {Proceedings of IEEE International Symposium on Information Theory (ISIT)},
  address      = {Ann Arbor, MI, USA},
  pages        = {1--6},
  year         = {2025},
  doi          = {10.1109/ISIT63088.2025.11195437}
}

@misc{Hadi1,
  author       = {Aghaee, Hamed and Akhbari, Babak and Deppe, Christina},
  title        = {Network oblivious transfer: Limits and capacities},
  howpublished = {arXiv:2501.17021 [cs.IT]},
  year         = {2025}
}

@misc{Hadi3,
  author       = {Aghaee, Hamed and Deppe, Christina and Boche, Holger},
  title        = {Network Oblivious Transfer via Noisy Broadcast Channels},
  howpublished = {arXiv:2510.25343v1 [cs.IT]},
  year         = {2025}
}

@inproceedings{Hadi6,
  author       = {Aghaee, Hamed and Deppe, Christina and Boche, Holger},
  title        = {Oblivious Transfer over Discrete Memoryless Broadcast Channels},
  booktitle= { Proceedings of IEEE International Conference on Communications (ICC2026)},
  year         = {2026, in press}
}

@inproceedings{Hadi4,
  author       = {Aghaee, Hamed and Deppe, Christina and Akhbari, Babak},
  title        = {On Information Theoretical Limits of Oblivious Transfer},
  booktitle    = {Proceedings of IEEE Global Communications Conference (GLOBECOM)},
  address      = {Taipei, Taiwan},
  year         = {2025}
}

@article{Granelli,
  author       = {Granelli, Federico and Bassoli, Roberto and Nötzel, Jan and Fitzek, Frank H. P. and Boche, Holger and da Fonseca, Nuno L. S. and Guerrieri, Alessandro},
  title        = {A Novel Architecture for Future Classical-Quantum Communication Networks},
  journal      = {Wireless Communications \& Mobile Computing},
  volume       = {2022},
  pages        = {Article ID 3770994},
  month        = jan,
  year         = {2022},
  doi          = {10.1155/2022/3770994}
}

@article{Ekert,
  author       = {Ekert, Artur K.},
  title        = {Quantum cryptography based on Bell's theorem},
  journal      = {Physical Review Letters},
  volume       = {67},
  number       = {6},
  pages        = {661--663},
  year         = {1991}
}

@article{Bennett,
  author       = {Bennett, Charles H. and Shor, Peter W. and Smolin, John A. and Thapliyal, Ashish V.},
  title        = {Entanglement-assisted capacity of a quantum channel and the reverse Shannon theorem},
  journal      = {IEEE Transactions on Information Theory},
  volume       = {48},
  number       = {10},
  pages        = {2637--2655},
  year         = {2002}
}

@article{Hsieh,
  author       = {Hsieh, Min-Hsiu and Devetak, Igor and Winter, Andreas},
  title        = {Entanglement-assisted capacity of quantum multiple-access channels},
  journal      = {IEEE Transactions on Information Theory},
  volume       = {54},
  number       = {7},
  pages        = {3078--3090},
  year         = {2008}
}

@ARTICLE{Janis,
  author={Nötzel, Janis},
  journal={IEEE Journal on Selected Areas in Information Theory}, 
  title={Entanglement-Enabled Communication}, 
  year={2020},
  volume={1},
  number={2},
  pages={401-415},
  doi={10.1109/JSAIT.2020.3017121}}

@article{Leditzky,
  author       = {Leditzky, Felix and Alhejji, Mohammed and Levin, Eric and Smith, Graeme},
  title        = {Playing games with multiple access channels},
  journal      = {Nature Communications},
  volume       = {11},
  number       = {1},
  pages        = {1497},
  month        = mar,
  year         = {2020}
}

@article{Seshadri,
  author       = {Seshadri, Ashwin and Leditzky, Felix and Siddhu, Vishal and Smith, Graeme},
  title        = {On the separation of correlation-assisted sum capacities of multiple access channels},
  journal      = {IEEE Transactions on Information Theory},
  volume       = {69},
  number       = {9},
  pages        = {5805--5844},
  month        = sep,
  year         = {2023},
  doi          = {10.1109/TIT.2023.3274434}
}

@article{Pereg1,
  author       = {Pereg, Uriya and Deppe, Christina and Boche, Holger},
  title        = {Quantum interference channels: Capacity bounds and protocols},
  journal      = {IEEE Transactions on Information Theory},
  volume       = {69},
  number       = {5},
  pages        = {1--15},
  year         = {2023}
}

@article{Fawzi,
  author       = {Fawzi, Omar and Fermé, Pierre},
  title        = {Multiple-access channel coding with non-signaling correlations},
  journal      = {IEEE Transactions on Information Theory},
  volume       = {70},
  pages        = {1693--1719},
  year         = {2024}
}

@article{Yard,
  author       = {Yard, Jonathon and Hayden, Patrick and Devetak, Igor},
  title        = {Capacity theorems for quantum multiple-access channels},
  journal      = {IEEE Transactions on Information Theory},
  volume       = {57},
  number       = {10},
  pages        = {7147--7167},
  year         = {2011}
}

@article{Pereg2,
  author       = {Pereg, Uriya and Deppe, Christina and Boche, Holger},
  title        = {Quantum broadcast channels: Capacity and protocols},
  journal      = {IEEE Transactions on Information Theory},
  volume       = {67},
  number       = {12},
  pages        = {1--20},
  year         = {2021}
}

@article{Clauser,
  author       = {Clauser, John F. and Horne, Michael A. and Shimony, Abner and Holt, Richard A.},
  title        = {Proposed experiment to test local hidden-variable theories},
  journal      = {Physical Review Letters},
  volume       = {23},
  number       = {15},
  pages        = {880--884},
  month        = oct,
  year         = {1969},
  doi          = {10.1103/PhysRevLett.23.880}
}

@article{Quek,
  author       = {Quek, Yixian and Shor, Peter W.},
  title        = {Quantum and superquantum enhancements to two-sender, two-receiver channels},
  journal      = {Physical Review A},
  volume       = {95},
  number       = {5},
  pages        = {052329},
  month        = may,
  year         = {2017},
  doi          = {10.1103/PhysRevA.95.052329}
}

@article{Popescu,
  author       = {Popescu, Sandu and Rohrlich, Daniel},
  title        = {Quantum nonlocality as an axiom},
  journal      = {Foundations of Physics},
  volume       = {24},
  number       = {3},
  pages        = {379--385},
  month        = mar,
  year         = {1994},
  doi          = {10.1007/BF02058098}
}

@misc{Hawellek1,
  author       = {Hawellek, Jan and Mohan, Akshay and Aghaee, Hamed and Deppe, Christina},
  title        = {The Interference Channel with Entangled Transmitters},
  howpublished = {arXiv preprint arXiv:2411.10067 [quant-ph]},
  month        = nov,
  year         = {2024},
  doi          = {10.48550/arXiv.2411.10067}
}

@misc{Hawellek2,
  author       = {Hawellek, Jan and Mohan, Akshay and Aghaee, Hamed and Deppe, Christina},
  title        = {The Interference Channel with Entangled Transmitters},
  howpublished = {Accepted for publication in IEEE International Symposium on Information Theory (ISIT) 2025}
}

@INPROCEEDINGS{Hadi5,
  author={Aghaee, Hadi and Deppe, Christian},
  booktitle={2025 IEEE Information Theory Workshop (ITW)}, 
  title={On the Interference Channel with Entangled Transmitters}, 
  year={2025},
  volume={},
  number={},
  pages={1-6},
  doi={10.1109/ITW62417.2025.11240483}}

@inproceedings{Chou,
  author       = {Chou, Robert A.},
  title        = {Pairwise Oblivious Transfer},
  booktitle    = {2020 IEEE Information Theory Workshop (ITW)},
  pages        = {1--5},
  year         = {2021}
}

@article{Wolf,
  author  = {Stefan Wolf and J{\"u}rg Wullschleger},
  title   = {Oblivious Transfer and Non-Locality},
  journal = {IEEE Transactions on Information Theory},
  volume  = {52},
  number  = {6},
  pages   = {2519--2528},
  year    = {2006},
  doi     = {10.1109/TIT.2006.874980}
}

@article{Short,
  author  = {Anthony J. Short and Nicolas Gisin and Sandu Popescu},
  title   = {The physics of no-bit-commitment: generalised quantum non-locality versus oblivious transfer},
  journal = {Quantum Information Processing},
  volume  = {5},
  number  = {2},
  pages   = {131--143},
  year    = {2006},
  doi     = {10.1007/s11128-006-0015-4}
}

@article{Buhrman,
    author = {Buhrman, Harry and Christandl, Matthias and Unger, Falk and Wehner, Stephanie and Winter, Andreas},
    title = {Implications of superstrong non-locality for cryptography},
    journal = {Proceedings of the Royal Society A: Mathematical, Physical and Engineering Sciences},
    volume = {462},
    number = {2071},
    pages = {1919-1932},
    year = {2006},
    month = {02},
    issn = {1364-5021},
    doi = {10.1098/rspa.2006.1663},
    url = {https://doi.org/10.1098/rspa.2006.1663},
    eprint = {https://royalsocietypublishing.org/rspa/article-pdf/462/2071/1919/674011/rspa.2006.1663.pdf},
}

\end{document}